\theoremstyle{plain}
\newtheorem{Theorem}{Theorem}
\newtheorem{Proposition}[Theorem]{Proposition}
\newtheorem{Corollary}[Theorem]{Corollary}
\newtheorem{Lemma}[Theorem]{Lemma}
\newtheorem{Algorithm}[Theorem]{Algorithm}
\theoremstyle{definition}
\newtheorem{Definition}[Theorem]{Definition}
\theoremstyle{remark}
\newtheorem{Remark}[Theorem]{Remark}
\title{Representations and evaluation strategies for feasibly approximable functions\textsuperscript{\small1}}
\author {Michal Kone\v{c}n\'y \\ Aston University, Birmingham, UK\\ \texttt{\small m.konecny@aston.ac.uk}
\and 
Eike Neumann\textsuperscript{\small 2} \\ University of Oxford, UK \\ \texttt{\small eike.neumann@ox.ac.uk}}
\date{}
\newcommand{\ie}{\textit{i.e.}, ~}
\newcommand{\eg}{\textit{e.g.}, ~}
\newcommand{\B}{\mathcal{B}}
\newcommand{\ts}{2^*}
\newcommand{\R}{\mathbb{R}}
\newcommand{\Q}{\mathbb{Q}}
\newcommand{\D}{\mathbb{D}}
\newcommand{\N}{\mathbb{N}}
\newcommand{\M}{\mathcal{M}\!}
\newcommand{\qcb}{\operatorname{qcb}}
\renewcommand{\P}{\operatorname{P}}
\newcommand{\FP}{\operatorname{FP}}
\newcommand{\NP}{\operatorname{NP}}
\newcommand{\I}{{\mathcal{I}}}
\newcommand{\ID}{{\I\D}}
\newcommand{\CI}{C\left([-1,1]\right)}
\newcommand{\Cinf}{C^{\infty}[-1,1]}
\newcommand{\CT}{\operatorname{CT}}
\newcommand{\SCT}{\operatorname{SCT}}
\newcommand{\id}{\operatorname{id}}
\newcommand{\dom}{\operatorname{dom}}
\newcommand{\range}{\operatorname{range}}
\newcommand{\eval}{\operatorname{eval}}
\newcommand{\Op}{\operatorname{Op}}
\newcommand{\Const}{\operatorname{Const}}
\newcommand{\Fix}{\operatorname{Fix}}
\newcommand{\Free}{\operatorname{Free}}
\renewcommand{\div}{\operatorname{div}}
\newcommand{\primit}{\operatorname{primit}}
\newcommand{\paramax}{\operatorname{paramax}}
\newcommand{\const}{\operatorname{const}}
\newcommand{\Fun}{\operatorname{Fun}}
\newcommand{\BFun}{\operatorname{BFun}}
\newcommand{\DBFun}{\operatorname{DBFun}}
\newcommand{\Poly}{\operatorname{Poly}}
\newcommand{\PPoly}{\operatorname{PPoly}}
\newcommand{\Frac}{\operatorname{Frac}}
\newcommand{\PFrac}{\operatorname{PFrac}}
\newcommand{\PAff}{\operatorname{PAff}}
\newcommand{\LPoly}{\operatorname{LPoly}}
\newcommand{\LPPoly}{\operatorname{LPPoly}}
\newcommand{\LFrac}{\operatorname{LFrac}}
\newcommand{\Set}[2]{\left\{ #1 \; \mid \; #2 \right\}}
\newcommand{\norm}[1]{\left| #1 \right|}
\newcommand{\sem}[1]{\left\llbracket #1 \right\rrbracket}
\newcommand{\floor}[1]{\left\lfloor #1 \right\rfloor}
\newcommand{\Gev}{\operatorname{Gev}}
\newcommand{\name}[1]{{#1}}
\newcommand{\Prod}{\operatorname{Prod}_{\omega}}
\begin{document}

\maketitle

\begin{abstract}
A famous result due to Ko and Friedman (1982) asserts that
the problems of integration
and maximisation of a univariate real function
are computationally hard in a well-defined sense.
Yet, both functionals are routinely computed 
at great speed in practice.

We aim to resolve this apparent paradox by studying classes
of functions which can be feasibly integrated and maximised,
together with representations for these classes of functions
which encode the information which is necessary to
uniformly compute integral and maximum in polynomial time.
The theoretical framework for this is the second-order complexity
theory for operators in analysis
which was introduced by Kawamura and Cook (2012).

The representations we study are based on approximation
by polynomials, piecewise polynomials, and rational functions.
We compare these representations with respect to polytime reducibility.

We show that the representation based on approximation by piecewise polynomials
is polytime equivalent to 
the representation based on approximation by
rational functions. 

With this representation, all terms 
in a certain language, which is expressive
enough to contain the maximum and integral
of most functions of practical interest,
can be evaluated in polynomial time.
By contrast, both the representation based on
polynomial approximation
and the standard representation
based on function evaluation,
which implicitly underlies the Ko-Friedman result,
require exponential time to evaluate certain
terms in this language.

We confirm our theoretical results by
an implementation in Haskell,
which provides some evidence that second-order 
polynomial time computability
is similarly closely tied with 
practical feasibility
as its first-order counterpart.
\end{abstract}

\footnotetext[1]{\includegraphics[scale=0.04]{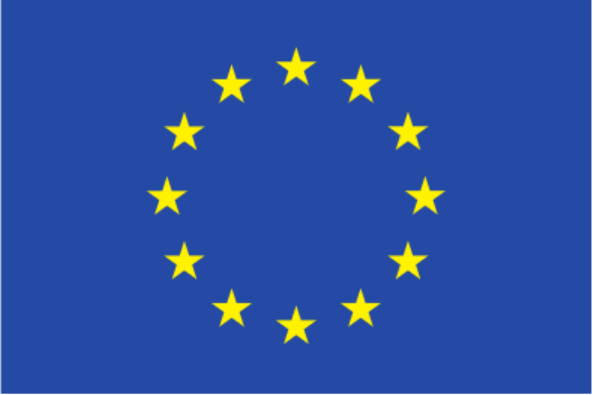} This project has received funding from the European Union’s Horizon 2020 research and innovation programme under the Marie Skłodowska-Curie grant agreement No 731143.}
\footnotetext[2]{Most of this work was carried out when Eike Neumann was affiliated with Aston University.}

\section{Introduction}

Consider the integration and maximisation functionals on the space \(\CI\) of univariate continuous functions over the compact interval $[-1,1]$:
\[
    f \mapsto \int_{-1}^1 f(x)\,\mathrm{d}x
    \qquad
    \text{and}
    \qquad
    f \mapsto \max_{x\in [-1,1]} f(x)
\]

Both functionals constitute fundamental basic operations in numerical mathematics.
They are considered to be easy to compute for functions that occur in practice.
It was hence surprising that when Ko and Friedman \cite{KoFriedman} introduced a 
rigorous formalisation of computational complexity in real analysis and analysed the computational complexity of these functionals
within this model, they found that both problems are computationally hard in a well-defined sense.
They constructed an infinitely differentiable polytime computable function $f_0 \colon [-1,1] \to \R$ such that the function
$g(x) = \int_{-1}^x f_0(t) \mathrm{d}t$ is again polytime computable if and only if $\FP = \sharp \P$
and
an infinitely differentiable polytime computable function $f_1 \colon [-1,1] \to \R$ such that the function
$h(x) = \max_{t \in [-1, x]} f_1(t)$ is again polytime computable if and only if $\P = \NP$. 
Moreover, the real number $g(1) = \int_{-1}^1 f_0(t) \mathrm{d}t$ is polytime computable if and only if $\FP_1 = \sharp \P_1$, and the number $h(1) = \max_{t \in [-1,1]} f_1(t)$ is again polytime computable if and only if $\P_1 = \NP_1$.

This obvious discrepancy between practical observations and theoretical predictions 
deserves further discussion.
We will focus on two possible explanations for this observation:

\begin{itemize}\label{explanations}
  \item \textbf{Accuracy of results.} 
  Hardness in the
  theoretical results refers to how hard it is to compute
  the values of the function to an arbitrary accuracy.
  An algorithm for computing a real number takes as input
  a natural number $n$, encoded in unary, and outputs
  an approximation to $x$ to $n$ bits of accuracy.
  An algorithm for computing a real function $f$ takes as
  input a real number $x$, encoded as an oracle which maps
  accuracy requirements to approximations, and a natural
  number $n$, encoded in unary, and is required to output
  an approximation to $f(x)$ to $n$ bits of accuracy.
	The running time of the algorithm is a function of $n$
	which measures the number of steps the algorithm takes.
  By contrast, practitioners usually work at a fixed floating-point
  precision, which implies a fixed maximum accuracy. 
  It hence may not be justified to measure the complexity in the
  output accuracy, and other complexity parameters should be 
  considered more important.
  In fact, if one relaxes the definition of polytime computability
  such that in both the definition of real number computation
  and real function computation the 
  requirement that the approximation be correct to $n$ bits
  of accuracy
  is relaxed to the requirement that the approximation 
  be $1/n$ close to the true value, then 
  the range and integral of every polytime computable function
  are polytime computable.
  So maybe the theoretical infeasibility of these functionals is 
  an artefact of poorly chosen normalisation.
  
  \item \textbf{Representation of functions.}  Theoreticians
  use a simple representation (which we call $\Fun$) that
  treats all continuous functions equally, in the sense that 
  a function is polynomial time computable if and only if it
  has a polynomial time computable $\Fun$-name.  
  Practitioners, on the other hand, 
  tend to work on a much more restricted class of functions.
	They tend to work with functions which are given
  symbolically or which can be approximated well by certain kinds of
  (piece-wise) polynomial or rational functions.  
  As not every polynomial time computable function can be approximated
  by polynomials or rational functions in polynomial time,
  the implicit underlying representations favour a certain class of 
  functions, for which it is easier to compute integral and range.
\end{itemize}

The aim of this paper is to discuss these different explanations
both from a theoretical and a practical perspective and to resolve
the apparent contradiction between the theoretical hardness results
and practical observations.
To this end we study the computational complexity of the maximisation and integration
functionals with respect to various representations of continuous real functions within
the uniform framework of second-order complexity theory, introduced by Kawamura and Cook \cite{KawamuraCook}, 
and compare the practical performance of algorithms which use these representations on a small family of benchmark problems.

\paragraph{Classes of feasibly approximable functions.}

The complexity of integration and maximisation of univariate real-valued functions has been studied by various authors:
M\"uller \cite{Mueller} showed that if $f$ is a polytime \emph{analytic} function, then the function $g(x) = \int_{-1}^x f(t) \mathrm{d}t$ is again polytime (and analytic), and the function $h(x) = \max_{t \in [-1, x]} f(t)$ is again polytime (but not differentiable in general). This result was generalised by Labhalla, Lombardi, and Moutai \cite{LabhallaEtAl} to the strictly larger class of polytime functions in \emph{Gevrey's hierarchy}, a class of infinitely differentiable functions whose derivatives satisfy certain growth conditions. These functions are characterised in \cite{LabhallaEtAl} as those functions which can be approximated by a polynomial time computable fast converging Cauchy sequence of polynomials with dyadic rational coefficients. It is also shown that integral and maximum of a function are uniformly polytime computable from such a sequence. 
These results were strengthened and refined in various ways 
by Kawamura, M\"uller, R\"osnick, and Ziegler \cite{KMRZ} 
who studied the uniform complexity of maximisation and integration 
for analytic functions and functions in Gevrey's hierarchy 
in dependence on certain parameters which control the growth of the derivatives 
or the proximity of singularities in the complex plane.

While these results already show that maximisation and integration are polytime computable for a large class of practically relevant functions,
there are many practically relevant functions which are not contained in the class of infinitely differentiable functions with well-behaved derivatives:
\begin{itemize}
\item
For applications in control theory it is often necessary to work with functions which are constructed from smooth functions 
by means of pointwise minimisation or maximisation, and thus differentiability is usually lost.
\item
It is not difficult to show that the class of polytime computable functions in Gevrey's hierarchy 
is not uniformly polytime computably closed (with respect to the representation introduced in \cite{KMRZ}) under division by functions which are uniformly bounded by $1$ from below (see Appendix \ref{Appendix A} for a proof).
\end{itemize}

Also, while for any polytime computable $f$ in Gevrey's hierarchy, the function $h(x) = \max_{t \in [-1, x]} f(t)$ 
is again polytime computable, it is in general no longer smooth.
Thus, assuming $\P \neq \NP$, the question arises whether
$h(x)$ is easy to maximise and, more generally, whether
every function which is obtained from a
polytime computable function in Gevrey's hierarchy
by repeatedly applying the parametric maximisation operator  
$f \mapsto \lambda x.\max_{t \in [-1, x]} f(t)$
is polytime computable.
%(By Theorem~\ref{Theorem: PPoly CommonOps} the answer is yes.)

One of our main contributions is to identify a larger class of 
feasibly approximable functions which supports polytime integration and maximisation
and is closed under a larger set of operations, including division and
pairwise and parametric maximisation.

\paragraph{Compositional evaluation strategies.}

In practice, functions of interest are usually constructed from a small set of
(typically analytic) basic functions
by means of certain algebraic operations, such as arithmetic operations, taking primitives, or taking pointwise maxima.
In other words, most functions of practical interest can be expressed symbolically as terms in a certain language. 
Our main observation is that there is such a language which is rich enough to arguably contain the majority of functions of practical
interest, yet restrictive enough to ensure that all functions which are expressible in this language admit uniformly polytime computable
integral, maximum, and evaluation.

To make this claim precise, we introduce the notion of ``compositional evaluation strategy'' for a structure $\Sigma$.
To motivate this notion, consider how a user might specify a computational problem involving real numbers and functions.
We assume that the user specifies the problem symbolically as a term in a certain language and that the end result will be a real number which is expected to be produced to a certain accuracy. 
A library for exact real computation will translate the symbolic representation of the inputs into some internal representation, the details of which will be irrelevant to the user. 
It will operate on the internal representations --- usually in a modular, compositional manner --- 
to eventually produce a name of a real number in the standard representation, which can be queried for approximations to an arbitrary accuracy. 
Thus, there are certain types, such as real numbers in this example, whose representation is relevant to the user, as the user is interested in querying information about them according to a certain protocol, and other types, such as real functions in this example, which are only used internally and whose internal representation can be freely chosen by the library.

\begin{figure}[h]
\centering
\includegraphics[scale = 0.75]{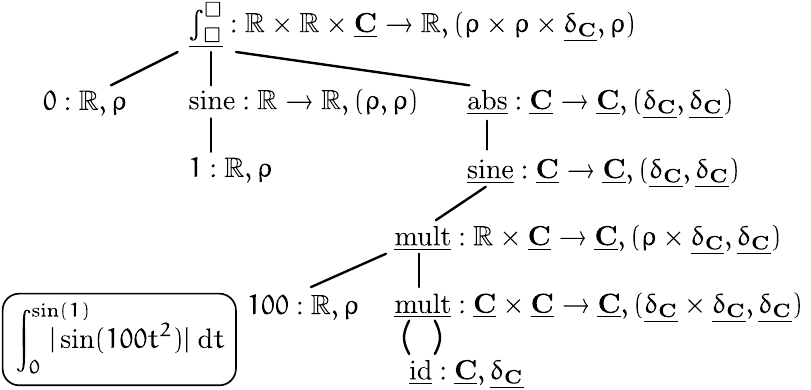}
\caption{Evaluating the term $\int_0^{\sin(1)}\left|\sin(100t^2)\right|\operatorname{dt}$ as a real number.
	     The output is represented in the standard representation $\rho$ of real numbers.
		 The underlined type $\underline{C}$ of real functions is used for internal computations only
		 and its representation $\underline{\delta_{C}}$ can be freely chosen by the library.
	    }
\end{figure}

\noindent
The structures $\Sigma$ we consider consist of:
\begin{enumerate}
\item Fixed spaces: A class of topological spaces with a given representation. 
These kinds of spaces correspond to the kinds of objects which are to be used, among other things, 
as inputs and outputs, so that the kind of information we can obtain on them is fixed. 
\item Free spaces: A class of topological spaces without any given representation. 
These kinds of spaces correspond to the types of intermediate results, whose internal representation is irrelevant to the user.
\item A set of constants and operations on these spaces.
\end{enumerate}

A compositional evaluation strategy provides representations for the free spa\-ces in $\Sigma$ and algorithms, 
in terms of these representations, for all constants and operations in $\Sigma$.
It allows us to evaluate a term in the signature of $\Sigma$ by applying the algorithms in a compositional manner. 
Compositional evaluation can be contrasted with evaluation that involves processing whole terms,
for example, symbolic differentiation.

We say that a compositional evaluation strategy is polytime if it evaluates every term of fixed space type whose free variables are all of fixed space type in polynomial time.
Hence the resource usage of a strategy is measured only in terms of those representations that are relevant to the user. 

Any representation of a space $X$ offers a trade-off between the ability to construct names efficiently and the ability to extract information from names efficiently.
If $\alpha$ and $\beta$ are representations of some space $X$ with $\alpha$ reducing to $\beta$ in polynomial time,
then any function $f \colon X \to Y$ that is polytime when $X$ is represented by $\beta$ is also polytime when $X$ is represented by $\alpha$.
Dually, any function $g \colon Y \to X$ that is polytime when $X$ is represented by $\alpha$ is also polytime when $X$ is represented by $\beta$.
In other words: the higher a representation sits in the reducibility lattice, the fewer functionals and the more points become polytime computable with respect to this representation.
However, the task of evaluating symbolic expressions in a modular manner will usually involve functions of ``symmetric'' type $X \to X$ or $X^n \to X$, such as algebraic operations or closure operations on $X$. 
In general, if $\alpha$ reduces in polynomial time to $\beta$ but not vice versa, then neither does polytime computability of a function $f \colon X \to X$ with respect to $\alpha$ imply polytime computability with respect to $\beta$ nor vice versa.
Thus, polytime reducibility does not allow us to measure how well a given representation trades off the ability to construct names with the ability to extract information from names. 
On the other hand, the study of compositional evaluation strategies will allow us to compare the trade-offs that are offered by different representations.

\paragraph{Results.}

We study various representations of the space $\CI$ 
based on polynomial and rational approximations 
and their relationships in terms of polytime reducibility. 
We show that the representation based on rational approximations 
is polytime equivalent to the representation based on piecewise polynomial approximations
(Corollary~\ref{Corollary: PPoly = Frac}).
This result helps us prove that the class of functions which are representable
by polynomial time computable fast converging Cauchy sequences of piecewise polynomials
is uniformly closed under a set of operations which are typically used
in computing to construct more complicated functions from simpler ones.

In particular, 
we give a compositional evaluation strategy that uses the representation based on approximation by piecewise polynomials 
which evaluates in polynomial time all terms of a structure whose constants are the polytime computable functions in Gevrey's hierarchy
and whose operations include
evaluation, range computation, integration, arithmetic
operations (including division), pointwise and parametric
maximisation, anti-differentiation, composition, and square
roots.

We observe that no compositional evaluation strategy that uses the representations based on polynomial approximation, piecewise affine approximation, or black-box function evaluation can evaluate this structure in polynomial time.
This suggests that when it comes to computing with certain functions of practical interest, the representation based on piecewise polynomial approximations offers a better trade-off between the ability to construct names efficiently and the ability to extract information from names efficiently than other commonly considered representations.

\paragraph{Implementation.}

Whilst in the discrete setting the link between polytime computability
and practical feasibility is
- up to the usual caveats -
well established and confirmed by countless examples of practical implementations,
to our knowledge, little to no work has been done to
link the somewhat more
controversial model of second order complexity in analysis
with practical implementation.
Thus, in order to demonstrate the relevance of our theoretical results to practical computation,
we have implemented compositional evaluation strategies based on the aforementioned representations
for a small fragment of the aforementioned structure within AERN2, 
a Haskell library for exact real number computation. 
We observed that for the most part the benchmark results fit our theoretical predictions quite well.
Our separation results translate to big differences in practical performance,
which can be observed even for moderate accuracies.

This suggests that the latter of the two explanations offered on page \pageref{explanations}
is more applicable:
The infeasibility of maximisation and 
integration with respect to the ``standard representation'' 
of real functions is not a mere accuracy normalisation issue,
and the differences between theoretical predictions and
practical observations are really due to the choice of representation.
The proofs which establish polytime computability translate to algorithms which seem to
be practically feasible, at least up to some common sense optimisations.

\section{The Computational Model}\label{section: Computational Model}

Here we briefly review the basic aspects of the theory of computation with continuous data 
in the tradition of computable analysis, 
as well as the basics of second-order complexity theory.
For background on computability in analysis see \eg \cite{SchroederPhD, PaulyRepresented, Weih, PourElRichards}.
Second-order computational complexity for computable analysis was developed in \cite{KawamuraCook},
building on ideas from \cite{KoFriedman, Ko}.

Let $2 = \{0,1\}$. Let $2^*$ denote the set of all finite binary strings. Let $\B = \left(2^*\right)^{2^*}$ denote \emph{Baire space}
\footnote{In computable analysis it is more common to use the computably isomorphic space ${\N}^\N$
	of functions on the natural numbers,
	but this choice is of course inconsequential.}.
A partial function $f \colon \subseteq \B \to \B$ is called \emph{computable} if 
there exists an oracle Turing machine $M$ which on input 
$u \in \ts$ 
with oracle 
$p \in \dom(f)$ 
computes $f(p)(u) \in \ts$. 
Sometimes, to emphasize the distinction, we will refer to $u$ as the ``input string'' and to $p$ as the ``input oracle'' to $M$.

A \emph{represented space} $(X,\delta_X)$ consists of a set $X$ together with a partial surjection 
$\delta_X \colon \subseteq \B \to X$ 
called the \emph{representation}. 
We will usually write $X$ for $(X,\delta_X)$ if $\delta_X$ is clear from the context. 
A \emph{partial multi-valued function}
$f\colon \subseteq (X,\delta_X) \rightrightarrows (Y, \delta_Y)$ 
between represented spaces 
$(X,\delta_X)$ and $(Y, \delta_Y)$
is just a relation 
$f \subseteq X \times Y$
on the underlying sets.
We write 
$f(x) = \Set{y \in Y}{(x,y) \in f}$
and
$\dom(f) = \Set{x \in X}{f(x) \neq \emptyset}$.
If 
$f\colon \subseteq (X,\delta_X) \rightrightarrows (Y, \delta_Y)$ 
and
$g\colon \subseteq (Y,\delta_Y) \rightrightarrows (Z, \delta_Z)$ 
are partial multi-valued functions, then their \emph{composition}
$g \circ f \colon \subseteq (X, \delta_X) \rightrightarrows (Z, \delta_Z)$
is the partial multi-valued function with
$\dom(g \circ f) = \Set{x \in \dom(f)}{f(x) \subseteq \dom(g)}$
and 
$g\circ f(x) = \bigcup_{y \in f(x)} g(y)$.
If $(X,\delta_X)$ and $(Y, \delta_Y)$ 
are represented spaces, and 
$f\colon \subseteq (X,\delta_X) \rightrightarrows (Y, \delta_Y)$ 
is a partial multi-valued function, we call 
$F \colon \subseteq \B \to \B$ 
a \emph{realiser} of $f$ if 
$\dom(F) \supseteq \dom \left(f \circ \delta_X\right)$ 
and 
$f\left(\delta_X(p)\right) \ni \delta_Y\left(F(p)\right)$ 
for all $p \in \dom\left(f \circ \delta_X\right)$. 
The map $f$ is called \emph{computable} if it has a computable realiser. 
The composition of computable partial multi-valued functions is again computable.
If $X$ carries a topology $\tau$ then $\delta_X \colon \subseteq \B \to X$ is called \emph{admissible for $\tau$} if 
$\delta_X$ is continuous and every continuous map 
$\varphi \colon \subseteq \B \to X$ 
factors through $\delta$ via some continuous
$\Phi \colon \subseteq \B \to \B$, 
\ie $\varphi = \delta_X \circ \Phi$. 
One can show that if $X$ and $Y$ are represented spaces and their respective representations are admissible for topologies on $X$ and $Y$,
then a partial function 
$f\colon \subseteq X \to Y$ 
is sequentially continuous with respect to these representations if and only if it is computable relative to some oracle. 
It was shown by \name{Matthias Schr\"oder} \cite{SchroederPhD, SchroederAdmissibility}
that the class of represented spaces which admit an admissible representation are precisely the 
$\qcb_0$-spaces: 
$T_0$ quotients of countably based spaces. 
The $\qcb_0$ spaces with (sequentially) continuous \emph{total} functions form a Cartesian closed category.
For further details see \cite{SchroederPhD}.
% All this is developed in great detail in \cite{SchroederPhD}.

Let us now turn to computational complexity, following the ideas of \name{Kawamura and Cook} \cite{KawamuraCook}.
A string function $\varphi\colon 2^* \to 2^*$ is called \emph{length-monotone} if
\[|u| \leq |v| \rightarrow |\varphi(u)| \leq |\varphi(v)| \]
for all $u,v \in \dom \varphi$. 
If $\varphi$ is a length-monotone function, we define its \emph{size} 
$|\varphi|\colon \N \to \N$ 
via
\[|\varphi|(n) = |\varphi(0^n)|. \]
Note that length-monotonicity implies that 
$|\varphi(u)| = |\varphi(v)|$ whenever $|u| = |v|$, 
which justifies the seemingly arbitrary choice of the string $0^n$ in the definition of the size. 
Let $\M \subseteq \B$ denote the set of length-monotone string functions. 
Note that there is a computable retraction of $\B$ onto $\M$,
so that computability theory remains unaffected by replacing $\B$ with $\M$.
Thus, a mapping 
$f \colon \subseteq \M \to \M$
is computable if there is an oracle Turing machine which on input oracle 
$\varphi \in \dom(f)$, 
and input string $u \in 2^*$ outputs 
$f(\varphi)(u) \in 2^*$. 
The mapping $f$ is 
\emph{computable in time $T\colon \N^{\N} \times \N \to \N$}, 
if there is such a machine which outputs 
$f(\varphi)(u)$ 
within time 
$T(|\varphi|,|u|)$.

We now introduce the class of ``feasibly computable functions'' within this setting.
The set of \emph{second-order polynomials} is defined inductively as follows:
\begin{enumerate}
\item The ``free variable'' $X$ and the ``constant'' $1$ are second-order polynomials.
\item If $P$ and $Q$ are second-order polynomials then so are their sum $P + Q$, their product $P \cdot Q$, 
and the term $\Phi(P)$.
\end{enumerate}

A second-order polynomial $P$ defines a map $\left\llbracket P \right\rrbracket \colon \N^\N \times \N \to \N$ which is inductively defined as follows:
\begin{enumerate}
\item $\left\llbracket 1 \right\rrbracket(f, n) = 1$.
\item $\left\llbracket X \right\rrbracket(f, n) = n$.
\item $\left\llbracket P + Q \right\rrbracket(f, n) = \left\llbracket P \right\rrbracket(f, n) + \left\llbracket Q \right\rrbracket(f, n)$
\item $\left\llbracket P \cdot Q \right\rrbracket(f, n) = \left\llbracket P \right\rrbracket(f, n) \cdot \left\llbracket Q \right\rrbracket(f, n)$
\item $\left\llbracket \Phi(P) \right\rrbracket(f, n) = f(\left\llbracket P \right\rrbracket)$
\end{enumerate}
We will from now on just write $P$ both for the second-order polynomial $P$ and the induced map $\left\llbracket P \right\rrbracket$.

A partial mapping 
$f \colon \subseteq \M \to \M$ 
is called \emph{polytime computable} if 
$f(\varphi)(u)$ 
is computable in time 
$P(|\varphi|,|u|)$ 
for some second-order polynomial $P$.
The class of \emph{total} second-order polytime computable functions coincides with the
class of \emph{basic feasible functionals} \cite{KapronCook}.

These notions translate to represented spaces in the usual way: 
A point $x$ in a represented space $(X,\delta_X)$ is polytime computable if and only if it has a polytime computable name. 
A partial multi-valued function 
$f \colon \subseteq (X,\delta_X) \rightrightarrows (Y, \delta_Y)$ 
is polytime computable if and only if it has a polytime computable $(\delta_X,\delta_Y)$-realiser.
It is often convenient to express the assertion that a function $f \colon X \to Y$ is polytime computable by saying that the value $f(x)$ is uniformly polytime computable in $x$.
The composition of polytime computable functions is again a polytime computable function.
If $X$ is a represented space with representations 
$\delta_X \colon \subseteq \M \to X$ 
and 
$\delta'_X\colon \subseteq \M \to X$ 
we say that $\delta_X$ \emph{reduces to} $\delta_X'$ in polynomial time
and write $\delta_X \leq \delta_X'$ if the identity $\id_X$ on $X$ is polytime
$(\delta_X, \delta_X')$-computable. 
If $\delta_X \leq \delta_X'$ and $\delta_X' \leq \delta_X$ then we say that 
$\delta_X$ and $\delta_X'$ are \emph{polytime equivalent} and write $\delta_X' \equiv \delta_X$.

We will need to introduce canonical representations of finite products. 
Let $\delta_{X_i} \colon \subseteq \M \to X_i$ be a finite family of representations where $i = 1, \dots, n$.
Our goal is to define the product representation
$\delta_{X_1} \times \dots \times \delta_{X_n} \colon \subseteq \M \to X_1 \times \dots \times X_n$ 
Encode the numbers $1, \dots, n$ in binary with a fixed number of digits ($\sim \log_2 n$) and denote the resulting strings by $\mathbf{1}, \dots, \mathbf{n}$. 
Let $\varphi_i \colon \ts \to \ts$ be length-monotone functions for $i = 1, \dots, n$.
Let
\[
	l(k) = \max\Set{|\varphi_j|(k)}{j = 1, \dots, n}.
\]
Define the length-monotone function 
\[
	\langle \varphi_1, \dots, \varphi_n \rangle(\mathbf{i}\cdot u) 
	= \varphi_i(u)\cdot 1 \cdot 0^{l(|u|) - |\varphi_i|(k)} 
\]
Extend this function to all of $2^*$ by letting
$\langle \varphi_1, \dots, \varphi_n \rangle(u) = \varepsilon$,
where $\varepsilon$ denotes the empty string,
if $|u| < |\mathbf{1}|$
and
$\langle \varphi_1, \dots, \varphi_n \rangle(u) = 0^{l\left(|u| - |\mathbf{1}|\right) + 1}$,
if $|u| \geq |\mathbf{1}|$ and $\langle \varphi_1, \dots, \varphi_n \rangle(u)$ was not previously defined.
Now define the representation as follows:
\[\dom\left(\delta_{X_1} \times \dots \times \delta_{X_n}\right) = 
	\Set{\langle \varphi_1, \dots, \varphi_n \rangle}
			{\varphi_i \in \dom\left(\delta_{X_i}\right)} \]
\[\delta_{X_1} \times \dots \times \delta_{X_n}
	\left(\left\langle \varphi_1, \dots, \varphi_n \right\rangle\right) =
		\left(\delta_{X_1}(\varphi_1), \dots, \delta_{X_n}(\varphi_n)\right)
	 \]
	 
Finally, let us give some concrete examples of represented spaces that we will use in the rest of the paper. Countable discrete spaces such as the space of natural numbers $\N$, the space of dyadic rationals $\D$, or the space of rationals $\Q$ are represented via standard numberings, \eg $\nu_\Q \colon \N \to \Q$. By identifying $\N$ with $2^*$, we can view such numberings as maps $\nu_\Q \colon 2^* \to \Q$, which allows us to introduce representations such as $\delta_\Q \colon \M \to \Q$, where $\delta_\Q(\varphi) = \nu_\Q(\varphi(\varepsilon))$.
As a more interesting example, consider the space $\R$ of real numbers. Let
$\rho \colon \subseteq \M \to \R$
with 
$\dom(\rho) = 
	\Set{\varphi \in \M}
	{\forall u,v \in 2^*. 
		\left(\left|\nu_\D\left(\varphi\left(0^{|u|}\right)\right) - \nu_\D\left(\varphi\left(0^{|v|}\right)\right)\right| 
					\leq 2^{-|u|} + 2^{-|v|} \right)}$
and 
$\rho(\varphi) = \lim_{n \to \infty} \nu_\D\left(\varphi\left(0^n\right)\right)$. Using the canonical product construction, we obtain a representation $\rho^n$ of $\R^n$.

\begin{Remark}\label{Remark: compactness and second-order bounds} 
When working with a compact space, one can 
restrict its representation to a compact subset of $\M$,
removing the need for second-order complexity bounds.
Let us illustrate this in the case of the compact unit interval $[-1,1]$.
Using a suitable encoding of dyadic numbers we can find for every real number $x$ a dyadic
approximation of $x$ to error $2^{-n}$ which uses at most 
$2\left(\floor{\log_2(|x| + 1)} + n\right) + 3$
bits. 
Hence, the interval $[-1,1]$ admits a representation $\rho_{[-1,1]} \colon \subseteq \M \to [-1,1]$
with 
$\dom(\rho_{[-1,1]}) \subseteq \Set{\varphi \in \M}{|\varphi|(n) \leq 2\left(n + 1\right) + 3}$.

It is worth noting that we can restrict $\rho$ in a similar way to obtain a representation of all of $\R$, where every name of $x \in \R$ is bounded by $2\left(\floor{\log_2(|x| + 1)} + n\right) + 3$, so that we can bound the running time of an algorithm in terms of the output accuracy and the single number $\log_2(|x| + 1)$ alone, without having to resort to general second-order bounds.

In contrast, the use of genuine second-order bounds cannot be avoided with spaces that are not $\sigma$-compact,
such as $\CI$, the focus of this work.

\end{Remark}

\section{Representations of $\CI$}\label{section: representations}

In this section we introduce a number of commonly used representations of the space $\CI$ of continuous functions over the interval $[-1,1]$ and study their relation in the polytime-reducibility lattice.
Most of these representations and their relationships have been studied already by Labhalla, Lombardi, and Moutai \cite{LabhallaEtAl}, albeit in a slightly different framework. 
Nevertheless, many proofs from \cite{LabhallaEtAl} carry over easily to our chosen framework.
The main new result is the equivalence of rational- and piecewise-polynomial approximations, which is left as an open question in \cite{LabhallaEtAl}.

Most of the representations we study are so-called Cauchy representations,
where an element of a metric space is represented by a fast converging
Cauchy sequence of elements from a countable dense subset.
To spell it out explicitly:

\begin{Definition}
Let $X$ be a separable metric space.
Let $A \subseteq X$ be a countable dense subset of $X$.
Let $\nu_A \colon \ts \to A$ be a numbering of $A$.
Then the Cauchy representation of $X$ induced by $\nu_A$ is the representation
of $X$ where a length-monotone string function $\varphi \in \M$ is a name of 
$x \in X$ if and only if  for all $u \in \ts$ we have $d(\nu_A(\varphi(u)), x) < 2^{-|u|}$.
\end{Definition}

\begin{Definition}\label{Definition: representations}
We define representations $\Poly$, $\PPoly$, $\Frac$, $\PFrac$, $\PAff$, and $\Fun$ of the space $\CI$ of continuous functions over the interval $[-1,1]$ as follows:
\begin{enumerate}
\item 
A $\Fun$-name of a function $f \in \CI$ is a length-monotone string function 
$\varphi\in \M$ 
such that $\varphi(\cdot)$ encodes a sampling of $f$ on dyadic rational points 
and $|\varphi|(\cdot)$ encodes a modulus of uniform continuity of $f$. 
More explicitly, we require
\[ \left|\nu_\D\left(\varphi(\langle u, v \rangle)\right) - f(\nu_\D(u))\right| \leq 2^{-|v|}, \]
where $\langle \cdot, \cdot \rangle$ denotes a standard pairing function on binary strings,
and for all $x, y \in [-1,1]$:
\[|x - y| < 2^{-\left|\varphi\right|(n)} \Rightarrow |f(x) - f(y)| < 2^{-n}. \]
\item 
A $\Poly$-name of a function $f \in \CI$ is a fast converging Cauchy sequence of polynomials 
in the monomial basis with dyadic rational coefficients.
More formally, fix a standard numbering
$\nu_{\D[x]} \colon \ts \to \D[x]$
of the polynomials with dyadic rational coefficients.
The representation $\Poly$ is the Cauchy representation
induced by $\nu_{\D[x]}$.
\item A piecewise polynomial with dyadic rational breakpoints and coefficients is a continuous function $g \colon [-1,1] \to \R$ such that there exist dyadic rational numbers $-1 = a_0, a_1, \dots, a_n = 1$ such that $g|_{[a_i, a_{i + 1}]}$ is a polynomial with dyadic rational coefficients.
A $\PPoly$-name of a function $f \in \CI$ is a fast converging Cauchy sequence of piecewise polynomials in the monomial basis with dyadic rational breakpoints and coefficients. 
More formally, fix a standard numbering of the piecewise polynomials with dyadic breakpoints and coefficients and let $\PPoly$ be the Cauchy representation of $\CI$ induced by this numbering.
\item A $\PAff$-name of a function $f \in \CI$ is a fast converging Cauchy sequence of piecewise affine functions with dyadic breakpoints and coefficients. 
Piecewise affine functions are defined analogously to piecewise polynomials.
More formally, fix a standard numbering of the piecewise affine functions with dyadic breakpoints and coefficients and let $\PAff$ be the Cauchy representation of $\CI$ induced by this numbering.
\item 
A $\Frac$-name of a function $f \in \CI$ is a fast converging Cauchy sequence of rational functions with dyadic coefficients. 
A rational function is a quotient of two polynomials whose denominator has no zeroes in $[-1,1]$.
We choose our notation such that every such rational function is given as a quotient of two polynomials
$P,Q \in \D[x]$ 
which is normalised such that $Q(x) \geq 1$ for all $x \in [-1,1]$.
More formally, fix a standard numbering of the rational functions with dyadic coefficients and let $\Frac$ be the Cauchy representation of $\CI$ induced by this numbering.
\item 
A $\PFrac$-name of a function $f \in \CI$ is a fast converging Cauchy sequence of piecewise rational 
functions with dyadic breakpoints and coefficients. 
Piecewise rational functions are defined analogously to piecewise polynomials and piecewise affine functions.
We again require that the denominator of every rational function be bounded from below by $1$.
More formally, fix a standard numbering of the piecewise rational functions with dyadic breakpoints and coefficients and let $\PFrac$ be the Cauchy representation of $\CI$ induced by this numbering.
\end{enumerate}
\end{Definition}

The representation $\Fun$ is the most efficient representation which renders evaluation computable, 
in the sense that it satisfies the following universal property:

\begin{Proposition}[\cite{KawamuraCook}]\label{Proposition: universal property of Fun}
The following are equivalent for a representation of continuous
functions $\delta\colon\subseteq \M \to \CI$:
\begin{enumerate}
\item Evaluation 
\[\eval\colon \CI \times [-1,1] \to \R, \; (f,x) \mapsto f(x) \]
is polynomial-time $(\delta\times\rho,\rho)$-computable.
\item $\delta \leq \Fun$.
\end{enumerate}
\end{Proposition}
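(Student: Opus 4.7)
The plan is to prove the two directions separately, with most of the work going into (1)$\Rightarrow$(2). For (2)$\Rightarrow$(1), it suffices to show that $\eval$ itself is polytime $(\Fun\times\rho,\rho)$-computable and compose this with the assumed reduction $\delta\leq\Fun$. Given a $\Fun$-name $\varphi$ of $f$, a $\rho$-name $\psi$ of $x$, and required accuracy $2^{-n}$: first read $m:=|\varphi|(n+1)$ as the length of $\varphi(0^{n+1})$; query $\psi(0^m)$ to obtain a dyadic $d$ with $|d-x|\leq 2^{-m}$; and finally query $\varphi(\langle u,0^{n+1}\rangle)$, for $u$ encoding $d$, to get a dyadic within $2^{-(n+1)}$ of $f(d)$. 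The modulus property of $|\varphi|$ forces $|f(d)-f(x)|\leq 2^{-(n+1)}$, so the total error is at most $2^{-n}$, and the cost is polynomial in $n$ and $|\varphi|(n+1)$.

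For (1)$\Rightarrow$(2), assume that $\eval$ is realised by an oracle machine running in time $P(|\langle\varphi,\psi\rangle|,n)$ for some second-order polynomial $P$. We produce a polytime map sending a $\delta$-name $\varphi$ of $f$ to a $\Fun$-name $\varphi'$ of $f$. By Remark~\ref{Remark: compactness and second-order bounds} we may restrict attention to $\rho$-names of points of $[-1,1]$ whose size is bounded by $\mu(k):=2(k+1)+3$, and from any dyadic $d\in[-1,1]$ one can build such a canonical name $\psi_d$ in polytime. On a query $\langle u,v\rangle$, $\varphi'$ simulates the evaluation machine on oracle $\langle\varphi,\psi_{\nu_\D(u)}\rangle$ with input $0^{|v|}$, yielding a dyadic within $2^{-|v|}$ of $f(\nu_\D(u))$ as the $\Fun$-specification requires. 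For the size of $\varphi'$, set $T(n):=P(N,n+1)$ where $N$ is the size function of a canonical pair whose components have sizes $|\varphi|$ and $\mu$; since $\mu$ is a fixed first-order polynomial, $T$ is a second-order polynomial in $|\varphi|$ alone, and by padding output strings we arrange $|\varphi'|(n)=T(n)+1$ at polytime cost.

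The remaining content is the verification that $T(n)+1$ really is a modulus of continuity for $f$. Given $x_1,x_2\in[-1,1]$ with $|x_1-x_2|\leq 2^{-(T(n)+1)}$, for each $k\leq T(n)$ I would pick a single dyadic $d_k$ within $2^{-k}$ of both $x_1$ and $x_2$ (which exists because their $2^{-k}$-neighbourhoods overlap in an interval of length at least $2^{-k}$, leaving room for a dyadic of denominator $2^{k+1}$ encodable within $\mu(k)$ bits), and extend to $\rho$-names $\psi_1,\psi_2$ of $x_1,x_2$ of size bounded by $\mu$ that agree on every query $0^k$ with $k\leq T(n)$. On input $0^{n+1}$ the evaluation machine touches its oracle only at strings of length at most $T(n)$, so it receives identical answers on $\langle\varphi,\psi_1\rangle$ and $\langle\varphi,\psi_2\rangle$ and returns the same dyadic $y$; since $|y-f(x_i)|\leq 2^{-(n+1)}$ for $i=1,2$, we conclude $|f(x_1)-f(x_2)|\leq 2^{-n}$. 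The main obstacle is precisely this colliding-names construction combined with the bookkeeping required to convert a second-order time bound in the composite size $|\langle\varphi,\psi\rangle|$ into a second-order polynomial in $|\varphi|$ alone; the uniform bound from Remark~\ref{Remark: compactness and second-order bounds} is exactly what makes this reduction possible.
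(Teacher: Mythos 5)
Your proposal is correct and follows essentially the same route as the paper: the easy direction via evaluation being polytime for $\Fun$, and the hard direction by using the $\delta$-evaluation algorithm to sample $f$ at dyadic points and extracting the modulus of continuity from the second-order time bound, made uniform in $|\varphi|$ alone via the compactness bound of Remark~\ref{Remark: compactness and second-order bounds}. Your explicit colliding-names argument is just a spelled-out version of the paper's one-line observation that a running-time bound for an algorithm computing $\eval(f,\cdot)$ is a modulus of continuity of $f$ (taking canonical $\rho$-names whose answers depend only on query length makes your "agreement on queries $0^k$" cover all queries the machine can make).
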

\begin{proof}[Proof sketch]
It is easy to see that evaluation is polytime computable with respect to $\Fun$. Hence, if $\delta \leq \Fun$, then evaluation is polytime computable with respect to $\delta$.
Conversely, assume that $\delta$ renders evaluation polytime computable. 
Given a $\delta$-name of a function $f$ we can clearly evaluate $f$ on dyadic rational points in polynomial time, which yields ``half'' a $\Fun$-name of $f$. 
It remains to show that a modulus of continuity of $f$ can be uniformly computed in polynomial time. 
Since $\delta$ renders evaluation polytime computable there exists a second-order polynomial $P$ which bounds the running time of some algorithm which computes $\eval$. Since $[-1,1]$ is compact, we can assume that the running time of the algorithm on input $\langle\varphi,\xi\rangle$, where $\delta(\varphi) = f$, $\rho(\xi) = x$, is bounded by the function $P(|\varphi|, n)$ 
(since the size of $\xi$ can be bounded independently of $\xi$, cf.~ Remark \ref{Remark: compactness and second-order bounds}). 
Since this function bounds the running time of a $(\delta\times\rho,\rho)$-algorithm which computes $\eval(f,\cdot)\colon \R \to \R$, it follows that $P(|\varphi|, \cdot)$ is a modulus of continuity of $f$. 
As $\varphi$ is length-monotone we have $|\varphi|(n) = |\varphi(0^n)|$, so that this modulus of continuity
is uniformly polytime computable in the name $\varphi$.
\end{proof}

\begin{Corollary}
Let $f \colon [-1,1] \to \R$ be a continuous function. Then $f$ has a polytime computable realiser if and only if it has a polytime computable $\Fun$-name.
\end{Corollary}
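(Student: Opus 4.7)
My plan is to derive both directions from the universal property of $\Fun$ established in Proposition~\ref{Proposition: universal property of Fun} together with the compactness argument of Remark~\ref{Remark: compactness and second-order bounds}.

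For the easy direction, suppose $f$ admits a polytime computable $\Fun$-name $\varphi$. By Proposition~\ref{Proposition: universal property of Fun}, the evaluation functional $\eval \colon \CI \times [-1,1] \to \R$ is polytime $(\Fun \times \rho, \rho)$-computable. Given any $\rho$-name of a point $x \in [-1,1]$, I assemble in polynomial time the pair consisting of the fixed polytime name $\varphi$ and the given name of $x$, and feed it through the evaluation algorithm. Since the composition of polytime computable maps is polytime computable, $f$ is polytime $(\rho,\rho)$-computable.

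For the converse, suppose $f$ is polytime $(\rho,\rho)$-computable. I construct a polytime computable $\Fun$-name $\varphi$ of $f$, which requires supplying two pieces of data: the dyadic samples $\nu_\D(\varphi(\langle u,v \rangle))$ and the modulus $|\varphi|$. The samples are straightforward: the dyadic rational $\nu_\D(u)$ admits a canonical $\rho_{[-1,1]}$-name of size linear in $|u|$, so running the polytime algorithm for $f$ on this name at accuracy $2^{-|v|}$ yields the required approximation in polynomial time in $|u| + |v|$.

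The main obstacle is extracting a polytime computable modulus of uniform continuity from the bare assumption of polytime $(\rho,\rho)$-computability. Here I invoke Remark~\ref{Remark: compactness and second-order bounds} to restrict to the compact-domain representation $\rho_{[-1,1]}$, whose names are bounded by a fixed linear function of the accuracy index, so that the second-order polynomial running time of the algorithm collapses to an ordinary polynomial $P(n)$ depending only on $n$. I then claim that a small modification of $P$ is a modulus of continuity. Given $x, y \in [-1,1]$ with $|x - y| \leq 2^{-P(n)-2}$, density of the dyadics lets me pick $\rho_{[-1,1]}$-names $\varphi_x$ and $\varphi_y$ of $x$ and $y$ that agree on the query string $0^k$ for every $k \leq P(n)$: for each such $k$ choose a single dyadic rational within distance $2^{-k-1}$ of both $x$ and $y$, which exists by the triangle inequality, and pad canonically to maintain length-monotonicity and the $\rho_{[-1,1]}$ size bound. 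The algorithm runs for at most $P(n)$ steps and so only queries its oracle on such strings; it therefore produces identical $2^{-n}$-approximations of $f(x)$ and $f(y)$, yielding $|f(x)-f(y)| \leq 2^{-n+1}$. Consequently $|\varphi|(n) = P(n+1) + 2$ is a polytime computable modulus of continuity, which combined with the sampling procedure produces the desired polytime computable $\Fun$-name of $f$.
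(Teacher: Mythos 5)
Your proposal is correct and takes essentially the same route as the paper: the paper leaves this corollary as an immediate consequence of Proposition~\ref{Proposition: universal property of Fun}, and your argument --- polytime evaluation with a fixed polytime name for the forward direction, and dyadic sampling plus extraction of a modulus of continuity from the running-time bound collapsed via Remark~\ref{Remark: compactness and second-order bounds} for the converse --- is precisely the non-uniform instance of the argument in that proposition's proof. The only points to tidy are cosmetic: choose the two oracles to agree on \emph{all} strings of length at most $P(n)$ (e.g.\ by the canonical extension $\varphi(u)=\varphi(0^{|u|})$), not just on the strings $0^k$, and shift the index by one more to turn $\leq 2^{-n}$ into the strict inequality required in the definition of a modulus.
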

On the other hand, the representation $\PPoly$ is interesting since it allows for maximisation and integration in polynomial time.
The following result is folklore, see \eg \cite[Algorithm 10.4]{BasuPollackRoy}:
\begin{Theorem}\label{Theorem: polynomial equation solving}
There exists a polytime algorithm which takes as input 
a non-constant dyadic polynomial $P \in \D[x]$,
a rational number $y \in \Q$,
and an accuracy requirement $n \in \N$
and outputs a list of disjoint intervals
$[a_1, b_1], \dots, [a_m, b_m]$
such that 
\begin{itemize}
\item Every interval contains a solution to the equation $P(x) = y$.
\item Every solution to the equation $P(x) = y$ is contained in some interval.
\item Every interval has diameter $\leq 2^{-n}$.
\end{itemize}
\end{Theorem}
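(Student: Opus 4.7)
The plan is to reduce the problem to isolating real roots of a single polynomial and then bisecting. First I would form the polynomial $Q(x) = P(x) - y$. If $y$ is rational but not dyadic, write $y = p/q$ and replace $Q$ by $qQ$, which has integer coefficients (up to clearing the dyadic denominators of $P$) and the same set of real zeros, at the cost of only a polynomial increase in bit size. Since $P$ is non-constant, so is $Q$, and it therefore has at most $\deg(P)$ real roots, all lying in the Cauchy interval $[-M, M]$ where $M$ is polynomially bounded in the bit size of $Q$.

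Next I would invoke a classical real root isolation algorithm to produce disjoint closed intervals $[c_1, d_1], \dots, [c_m, d_m]$ with dyadic endpoints such that each interval contains exactly one real root of $Q$ and every real root lies in some interval. Two standard choices work: an algorithm based on Sturm sequences, which counts sign changes in a remainder sequence evaluated at interval endpoints, or the Collins--Akritas algorithm based on Descartes' rule of signs applied under repeated M\"obius transformations. Both are known to run in time polynomial in the bit size of $Q$, and indeed this is precisely Algorithm 10.4 of \cite{BasuPollackRoy} which is cited in the statement.

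Finally I would refine each isolating interval $[c_i, d_i]$ by straightforward bisection until its diameter is at most $2^{-n}$. At each step the midpoint $e$ is a dyadic rational, and since the interval isolates a single simple sign change of $Q$ one determines which subinterval $[c_i, e]$ or $[e, d_i]$ to keep by inspecting $\operatorname{sign}(Q(e))$ together with the stored sign of $Q$ at the endpoints. Only $O(n + \log(d_i - c_i))$ bisections are required per interval, and each step evaluates $Q$ at a dyadic of bit size polynomial in the input; by Horner's scheme this evaluation and the resulting sign determination are polynomial in the total input size.

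The routine parts are the bit-size bookkeeping in the bisection phase and the reduction from rational $y$ to an integer-coefficient polynomial. The only genuinely non-trivial ingredient is the polytime root isolation algorithm itself, and since this is an established classical result we import it as a black box from \cite{BasuPollackRoy} rather than reproving it. Hence the main obstacle is really just exposition: organising the three phases so that the polynomial time bound is transparent and so that the output intervals, after possibly merging those that accidentally overlap after refinement, satisfy the three stated conditions.
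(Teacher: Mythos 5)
Your overall route is the same as the paper's: the paper offers no argument of its own, simply recording the statement as folklore and pointing to Algorithm 10.4 of \cite{BasuPollackRoy}, exactly the black box you import, with the reduction to $Q(x)=P(x)-y$, denominator clearing, and final refinement being the routine bookkeeping around it.

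There is, however, one step in your refinement phase that would genuinely fail as written: you bisect an isolating interval by inspecting $\operatorname{sign}(Q(e))$ at the midpoint, ``since the interval isolates a single simple sign change of $Q$''. Nothing in the hypotheses guarantees simplicity. For $P(x)=x^2$ and $y=0$ the root $x=0$ has even multiplicity, $Q$ is nonnegative everywhere, and sign inspection at the midpoint cannot tell which half contains the root (and may discard it). The same issue affects the isolation phase if you choose a Descartes/Collins--Akritas variant, since those methods are usually stated for (or only terminate on) square-free inputs. The standard fix is cheap and should be stated: replace $Q$ by its square-free part $Q/\gcd(Q,Q')$, which is computable in polynomial time, has the same real zeros, and has only simple roots, after which both the isolation step and your sign-based bisection are sound; alternatively, carry out the refinement by Sturm-sequence root counting on the subintervals, which is insensitive to multiplicity. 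With that amendment your argument is complete and matches the paper's (cited) proof in substance.
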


\begin{Corollary}\label{Corollary: PPoly maximisation}
The operators
\[\paramax \colon \CI \to \CI, \; f \mapsto \lambda x.\left(\max\Set{f(t)}{t \leq x} \right), \]
\[\max \colon \CI \times \CI \to \CI, \; (f,g) \mapsto \max(f,g) \]
and 
\begin{align*}
	&\operatorname{join} \colon \subseteq [-1,1] \times \CI \times \CI \to \CI,
	\\
	&(a, f, g) 
	\mapsto 
	\lambda x. 
		\begin{cases} f(x) &\text{if }x \leq a \\
					  g(x) &\text{if }x \geq a
		\end{cases},
\end{align*}
where 
$\dom(\operatorname{join}) = \Set{(a,f,g)}{f(a) = g(a)}$,
are uniformly polytime computable with respect to $\PPoly$.
\end{Corollary}
\begin{proof}[Proof idea]
The proof is very elementary but requires a fair amount of easy but cumbersome quantitative estimates of the size of the objects involved in the construction.
We will therefore only sketch the main ideas behind the proof.

All three claims easily reduce to the claim that the respective operation is computable in polynomial time 
when the input is a dyadic polynomial and the output is a fast converging Cauchy sequence of dyadic piecewise polynomials.

To compute $\paramax$ for a given polynomial $f$ on an interval $[a,b]$, first use Theorem \ref{Theorem: polynomial equation solving} to compute a sufficiently good approximation of the set of critical points of $f$ in $[a,b]$.
Use this to find a list of points $a = x_0 < x_1 < \dots < x_m = b$ meeting the following three conditions: Every $x_i$ is close to either a critical point or a boundary point, we have the inequalities $f(a) \leq f(x_0) < f(x_1) < \dots < f(x_m)$, and $f(x_i)$ satisfies $f(x_i) = \sup_{x \leq x_i} f(x)$.

On the open interval $(x_i, x_{i + 1})$ the equation $f(x) = f(x_i)$ has either no solution, \eg if $x_i$ is a saddle point, or exactly one solution, \eg if $x_i$ is a local minimum.
We can use Theorem \ref{Theorem: polynomial equation solving} to find out in polynomial time which is the case, and in case there is a solution, compute this solution in polynomial time to arbitrary accuracy.
Put $c_i = x_i$ if there is no solution, and if there is a solution, let $c_i$ be a sufficiently good approximation to this solution.
We then have an ascending sequence of points
\[
a = x_0 \leq c_0 < x_1 \leq c_1 < \dots < x_{m - 1} \leq c_{m - 1} < x_m = b.
\]
On the intervals of the form $[x_i, m_i]$ a good approximation of $\paramax(f)$ is given by the constant function $f(x_i)$. 
On the intervals of the form $[c_i, x_{i + 1}]$ a good approximation of $\paramax(f)$ is given by $f$.

The computation of the pointwise maximum of two polynomials reduces to the problem of solving the equation
$P(x) - Q(x) = 0$ to sufficient accuracy.

To avoid case distinctions involving boundary points, it is easiest to compute a piecewise polynomial approximation to $\max(P,Q)$ on all of $\R$.
Given two dyadic polynomials $P$ and $Q$, use Theorem \ref{Theorem: polynomial equation solving} to compute intervals 
$[a_1, b_1], \dots, [a_m, bm]$ 
that enclose the solutions to the equation
$P(x) = Q(x)$ on $\R$ to sufficient accuracy.

Then, by construction, on all intervals of the form $[b_i, a_{i + 1}]$ either $P$ is strictly larger than $Q$ or $Q$ is strictly larger than $P$.
We can decide which of these is the case by comparing 
$P\left(\tfrac{b_i + a_{i + 1}}{2}\right)$ 
and 
$Q\left(\tfrac{b_i + a_{i + 1}}{2}\right)$.
This yields a polynomial approximation to $\max(P,Q)$ on all intervals of the form $[b_i, a_{i + 1}]$.
An analogous argument yields a polynomial approximation on the intervals $(-\infty, a_1)$
and $(b_m, \infty)$.

It remains to compute an approximation on intervals of the form $[a_i, b_i]$.
We have already computed a polynomial approximation $f$ to $\max(P,Q)$ on $[b_{i - 1}, a_i]$ 
and another polynomial approximation $g$ to $\max(P,Q)$ on $[b_i, a_{i + 1}]$.
On $[a_i, b_i]$, let the approximation be the linear interpolation of the values $f(a_i)$ in $a_i$ and
$g(b_i)$ in $b_i$.
If $[a_i, b_i]$ is sufficiently small, then $P$ and $Q$ will be very close on $[a_i, b_i]$,
so that this yields a good approximation.

The polytime computability of $\operatorname{join}$ is established using similar ideas.
\end{proof}
Our goal is to fully understand the relationship between the representations we have just introduced with respect to polytime reducibility.

\begin{Proposition}\label{Proposition: Polynomial Lipschitz constant}
There exists a polytime algorithm which takes as input a piecewise rational function $f$ (given by our standard numbering) and returns as output a Lipschitz constant of $f$.
\end{Proposition}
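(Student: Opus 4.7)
The plan is to compute a Lipschitz constant separately on each piece and then take the maximum, using continuity of $f$ to glue these local bounds into a global one.

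First, I would establish sup-norm bounds on polynomials and their derivatives over $[-1,1]$ in terms of the $1$-norm of their coefficients. For a dyadic polynomial $P(x) = \sum_{j=0}^{d} a_j x^j$, on $[-1,1]$ one has $|P(x)| \leq \|P\|_1 := \sum_{j=0}^{d}|a_j|$ and $|P'(x)| \leq d\,\|P\|_1$. These quantities are computable in polynomial time from the standard notation of $P$, since $\|P\|_1$ is a sum of at most $d+1$ dyadic rationals explicitly listed in the input.

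Next, consider a single rational piece $P/Q$ on some sub-interval, where, by our choice of notation, $Q(x) \geq 1$ on $[-1,1]$. Then
\[
\left|\left(\frac{P}{Q}\right)'(x)\right| = \left|\frac{P'(x)Q(x) - P(x)Q'(x)}{Q(x)^2}\right| \leq |P'(x)||Q(x)| + |P(x)||Q'(x)|,
\]
using $Q(x)^2 \geq 1$. Applying the bounds from the previous step gives the explicit Lipschitz constant
\[
L_{P,Q} := \deg(P)\,\|P\|_1\,\|Q\|_1 + \deg(Q)\,\|P\|_1\,\|Q\|_1
\]
for $P/Q$ on the sub-interval. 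This is a dyadic rational (or we round up to the next integer) computable in polynomial time from the descriptions of $P$ and $Q$.

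Finally, let $L$ be the maximum of the constants $L_{P_i, Q_i}$ over all pieces of $f$; this is computed in polynomial time by iterating over the polynomially many pieces. It remains to argue that $L$ is a global Lipschitz constant on $[-1,1]$. For any $x < y$ in $[-1,1]$, insert all breakpoints of $f$ lying strictly between $x$ and $y$ to obtain a chain $x = z_0 < z_1 < \dots < z_k = y$ in which every consecutive pair lies inside a single piece, and apply the triangle inequality together with the piecewise bound $|f(z_{j+1}) - f(z_j)| \leq L\,(z_{j+1} - z_j)$; here we crucially use that $f$ is continuous across breakpoints, so the piecewise bound extends to the closed sub-intervals. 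Summing telescopes to $|f(x) - f(y)| \leq L\,|x-y|$. The only non-routine observation is this gluing step, and it is immediate from continuity; the remainder of the argument is straightforward algebraic manipulation and bookkeeping.
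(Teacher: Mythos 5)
Your proof is correct and follows essentially the same route as the paper's: bound the derivative of each rational piece $P/Q$ by exploiting $Q(x)^2\geq 1$, use the $1$-norm of coefficients to bound polynomials over $[-1,1]$, and take the maximum over pieces. The one cosmetic difference is that the paper forms the numerator polynomial $A = P'Q - PQ'$ explicitly and bounds $\sum_i |a_i|$, whereas you bound $|P'||Q| + |P||Q'|$ directly via $(\deg P + \deg Q)\,\|P\|_1\|Q\|_1$ without multiplying the polynomials out; both are valid polytime bounds. Your final gluing step across breakpoints via continuity and the triangle inequality is a nice explicit justification of a point the paper leaves implicit.
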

\begin{proof}
If $R(x) = P(x)/Q(x)$ is a rational function with $Q(x) \geq 1$ for all $x \in [-1,1]$, then by the mean value theorem, a Lipschitz constant of $f$ is given by a bound on $R'(x) = \left(P'(x)Q(x) - P(x)Q'(x)\right)/Q(x)^2$ over $[-1,1]$. Since $Q(x)^2 \geq 1$ it suffices to compute a bound on the absolute value of the polynomial $A(x) = P'(x)Q(x) - P(x)Q'(x)$. If $A(x) = \sum_{i = 0}^n a_ix^i$ then $|A(x)| \leq \sum_{i = 0}^n |a_i|$ for all $x \in [-1,1]$. This is clearly computable in polynomial time. If $f$ is a piecewise rational function with pieces $R_1, \dots, R_m$ then a Lipschitz constant for $f$ is given by the maximum of the Lipschitz constants of the $R_i$'s.
\end{proof}

\begin{Proposition}\label{Proposition: obvious reductions}
We have $\Poly \leq \PPoly \leq \PFrac \leq \Fun$, $\PAff \leq \PPoly$, and $\Frac \leq \PFrac$.
\end{Proposition}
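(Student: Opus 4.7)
The first four reductions are essentially syntactic, since each representation on the left is a literal special case of the one on the right. A $\Poly$-name is already a Cauchy sequence of piecewise polynomials (with the trivial partition $\{-1,1\}$), so the $\Poly \leq \PPoly$ algorithm just rewrites each approximant in our standard $\PPoly$ notation; the same argument gives $\PAff \leq \PPoly$ (degree-$1$ piecewise polynomials), $\Frac \leq \PFrac$, and $\PPoly \leq \PFrac$ (piecewise rationals with denominator $\equiv 1$). In each case the transformation acts in polynomial time on the string returned for input $u$ and preserves the length-monotone structure up to padding, so the reductions hold.

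The real content is $\PFrac \leq \Fun$. Given a $\PFrac$-name $\varphi \in \M$ of $f$, I will produce a $\Fun$-name $\psi$. For the evaluation component, on input $\langle u, v\rangle$ I evaluate the $|v|$-th approximating piecewise rational $R_{|v|}$ (read from $\varphi$) at the dyadic point $\nu_\D(u)$; this yields a dyadic number within $2^{-|v|}$ of $f(\nu_\D(u))$, and the evaluation itself runs in time polynomial in $|\varphi|(|v|) + |u|$ by the standard bounds on arithmetic with dyadic rationals.

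The key step is to realise a polynomial-size modulus of continuity through $|\psi|(n)$. Read $R_{n+2}$ off $\varphi$ and, using Proposition~\ref{Proposition: Polynomial Lipschitz constant}, compute in polynomial time a Lipschitz constant $L_{n+2}$ for $R_{n+2}$. For any $x, y \in [-1,1]$ the triangle inequality gives
\[
    |f(x) - f(y)| \;\leq\; |f(x) - R_{n+2}(x)| + L_{n+2}|x-y| + |R_{n+2}(y) - f(y)| \;\leq\; 2^{-(n+1)} + L_{n+2}|x-y|,
\]
so whenever $|x - y| \leq 2^{-(n+1)}/L_{n+2}$ we have $|f(x) - f(y)| \leq 2^{-n}$. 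Thus $m(n) := \lceil \log_2 L_{n+2} \rceil + n + 1$ is a valid modulus of continuity. I will then define $\psi$ so that $|\psi|(n) = m(n)$ by padding the output strings of the evaluation component appropriately (and ensuring the result is length-monotone by taking running maxima over smaller inputs).

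The obstacle to watch is the second-order bound on $|\psi|(n)$. Since $L_{n+2}$ is produced in polynomial time from $R_{n+2}$, its bit-length is polynomial in $|R_{n+2}| \leq |\varphi|(n+2)$, so $\log_2 L_{n+2}$ and hence $m(n)$ are bounded by a second-order polynomial in $|\varphi|$ and $n$. This gives a polytime $(\PFrac,\Fun)$-realiser of the identity and completes the chain.
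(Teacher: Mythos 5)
Your proposal is correct, and for the easy reductions ($\Poly \leq \PPoly \leq \PFrac$, $\PAff \leq \PPoly$, $\Frac \leq \PFrac$) it matches the paper, which simply declares them immediate. For the substantive step $\PFrac \leq \Fun$ you take a genuinely different route: you build the $\Fun$-name by hand, supplying the sampling component and then an explicit modulus of continuity $m(n) = \lceil \log_2 L_{n+2}\rceil + n + 1$ obtained from the Lipschitz constant of the approximant $R_{n+2}$, and you deal with length-monotonicity and the second-order size bound yourself. The paper instead invokes the universal property of $\Fun$ (Proposition~\ref{Proposition: universal property of Fun}): it only shows that $\eval$ at a $\rho$-given real point is polytime computable from a $\PFrac$-name (querying $x$ to precision $2^{-n-1}/L$ with $L$ from Proposition~\ref{Proposition: Polynomial Lipschitz constant}), and the modulus then comes for free, since the proof of the universal property extracts it from the second-order running-time bound using compactness of $[-1,1]$. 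Both arguments pivot on the same key lemma (polytime Lipschitz constants for piecewise rationals); yours is more elementary and makes the modulus explicit, at the cost of the padding/length-monotonicity bookkeeping you sketch, while the paper's is shorter because that bookkeeping is already absorbed into Proposition~\ref{Proposition: universal property of Fun}. One small quantitative slip to fix in your evaluation component: evaluating $R_{|v|}$ \emph{exactly} at a dyadic point gives a rational (not dyadic) value and already uses up the full error budget $2^{-|v|}$; you should instead read off $R_{|v|+1}$ and round its value to a dyadic within $2^{-|v|-1}$, and similarly note that $|\psi|(n)$ need only be at least $m(n)$ (so taking maxima with the actual output lengths is harmless, since enlarging the modulus value keeps it valid).
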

\begin{proof}
The reductions $\Poly \leq \PPoly \leq \PFrac$, $\PAff \leq \PPoly$, and $\Frac \leq \PFrac$ are immediate. It hence suffices to show $\PFrac \leq \Fun$. We will use the universal property of $\Fun$ (Proposition \ref{Proposition: universal property of Fun}) to do so, \ie it suffices to prove that a piecewise rational function can be evaluated in a point in polynomial time.

Suppose we are given a piecewise rational function $f$ with dyadic breakpoints and coefficients, 
a point $x \in [-1,1]$ encoded as a $\rho$-name and an accuracy requirement $n \in \N$. 
By Proposition \ref{Proposition: Polynomial Lipschitz constant} we can compute a Lipschitz constant $L$ of $f$ in polynomial time. 
Query the $\rho$-name of $x$ for a dyadic rational approximation $\tilde{x}$ to error $2^{-n - 1}/L$. 
We can determine an interval $[a,b]$ with $\tilde{x} \in [a,b]$ and $f|_{[a,b]} = P/Q$ with $Q \geq 1$ in polynomial time. 
Now, a dyadic rational approximation $\tilde{y}$ to error $2^{-n - 1}$ of $P(\tilde{x})/Q(\tilde{x})$ is computable in polynomial time. We have
\[|\tilde{y} - f(x)| \leq |\tilde{y} - f(\tilde{x})| + |f(\tilde{x}) - f(x)| \leq 2^{-n-1} + L|\tilde{x} - x| \leq 2^{-n}. \]
\end{proof}

Remarkably, the reduction $\Frac \leq \PFrac$ reverses:

\begin{Theorem}[\cite{LabhallaEtAl}]\label{Theorem: Frac = PFrac}
$\Frac \equiv \PFrac$.
\end{Theorem}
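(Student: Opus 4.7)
The plan is to establish the non-trivial direction $\PFrac \leq \Frac$, since $\Frac \leq \PFrac$ is part of Proposition~\ref{Proposition: obvious reductions}. The main analytic input is Newman's theorem on rational approximation of $|x|$: there is a sequence of rational functions $r_k = A_k/B_k$ of degree polynomial in $k$, with dyadic coefficients that are polytime computable from $k$, with $B_k(y) \geq 1$, and satisfying $\bigl|r_k(y) - |y|\bigr| \leq 2^{-k}$ uniformly on $[-2,2]$. By an affine change of variables one obtains, for each dyadic breakpoint $c \in [-1,1]$, a polytime-computable rational function $\tilde r_{c,k}(x)$ approximating $|x-c|$ to accuracy $2^{-k}$ on $[-1,1]$, whose size is polynomial in $k$ and in the bit-length of $c$.

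Given a $\PFrac$-name of $f$, I would first query it to obtain a piecewise rational function $f^{(n)}$ with $|f^{(n)} - f|_\infty \leq 2^{-n-1}$, whose pieces $R_i = P_i/Q_i$ (with $Q_i \geq 1$) are glued at dyadic breakpoints $-1 = a_0 < a_1 < \cdots < a_m = 1$. The natural telescoping identity
\[
f^{(n)}(x) \;=\; R_1(x) + \sum_{i=1}^{m-1} (R_{i+1} - R_i)(x)\, H(x - a_i)
\]
combined with continuity of $f^{(n)}$ at each $a_i$ implies that the numerator of $R_{i+1} - R_i$, namely $P_{i+1}Q_i - P_i Q_{i+1}$, is divisible by $(x - a_i)$, giving $(R_{i+1} - R_i)(x) = (x - a_i) N_i(x)/(Q_i(x) Q_{i+1}(x))$ for an explicit polynomial $N_i$. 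Using the identity $(x-a_i) H(x-a_i) = \bigl((x-a_i) + |x-a_i|\bigr)/2$, each summand becomes
\[
\frac{N_i(x)}{2\,Q_i(x)Q_{i+1}(x)}\Bigl((x - a_i) + |x - a_i|\Bigr),
\]
so replacing $|x - a_i|$ by $\tilde r_{a_i,k}(x)$ and collecting terms over the common denominator $\prod_i Q_i(x) B_{i,k}(x)$ produces a single rational function $\tilde f^{(n)}(x) = P(x)/Q(x)$ with $Q \geq 1$.

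The error introduced by each substitution is bounded by $M_i \cdot 2^{-k}$, where $M_i$ is a uniform bound on $|N_i|/(Q_i Q_{i+1})$ over $[-1,1]$, computable in polynomial time by the argument of Proposition~\ref{Proposition: Polynomial Lipschitz constant}. Choosing $k$ so that $\tfrac{1}{2}\sum_{i=1}^{m-1} M_i \cdot 2^{-k} \leq 2^{-n-1}$ keeps $k$ polynomial in $n$ and in the size of the input, so that $|f - \tilde f^{(n)}|_\infty \leq 2^{-n}$. The main obstacle I anticipate is bookkeeping: one must verify that the degrees, coefficient sizes, and the bounds $M_i$ stay polynomially bounded throughout the construction, and that the final rational function can be normalised into the standard form (with denominator $\geq 1$) in polynomial time. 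Both issues are handled by working directly with the normal forms required by $\Frac$ and appealing to Newman's explicit polynomial-size construction, but this is where the technical work lies.
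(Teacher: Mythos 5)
Your proposal is correct and follows essentially the route the paper intends: the paper does not prove this theorem itself but cites Labhalla--Lombardi--Moutai and remarks that the proof ``relies mainly on Newman's theorem'' on rational approximation of the absolute value, which is exactly the mechanism you use (telescoping the pieces via $(x-a_i)H(x-a_i)=\bigl((x-a_i)+|x-a_i|\bigr)/2$ and substituting Newman approximants for $|x-a_i|$). The divisibility of $P_{i+1}Q_i-P_iQ_{i+1}$ by $(x-a_i)$, the error budget, and the polynomial-size bookkeeping you outline are all sound, so this matches the intended argument.
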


The proof of Theorem \ref{Theorem: Frac = PFrac} given in \cite{LabhallaEtAl} relies mainly on Newman's theorem \cite{Newman} on the rational approximability of the absolute value function.
To establish lower bounds in the reducibility lattice we need to employ \emph{Markov's inequality}. 
For a proof of Markov's inequality see \eg\cite{Cheney}.

\begin{Lemma}[Markov's inequality]\label{Lemma: Markov's inequality}
Let $P$ be a polynomial of degree $\leq n$ on the interval $[-1,1]$. Then
\[\norm{P'} \leq n^2 \norm{P}. \]
On the interval $[a,b]$ we hence have
\[\norm{P'} \leq \frac{2n^2}{b - a} \norm{P}.\]
\end{Lemma}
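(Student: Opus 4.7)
The plan is to reduce the general $[a,b]$ statement to the unit-interval case by an affine change of variables, and then establish the $[-1,1]$ bound via the extremal property of the Chebyshev polynomials $T_n(x) = \cos(n \arccos x)$.

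For the reduction, given a polynomial $Q$ of degree $\leq n$ on $[a,b]$, let $P(t) = Q\bigl(\tfrac{b-a}{2}t + \tfrac{a+b}{2}\bigr)$, which is a polynomial of degree $\leq n$ on $[-1,1]$ with $\norm{P} = \norm{Q}$ and $\norm{Q'} = \tfrac{2}{b-a}\norm{P'}$. Granting the $[-1,1]$ bound, the general inequality with factor $2n^2/(b-a)$ follows immediately.

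For the bound on $[-1,1]$ I would normalise $\norm{P}\leq 1$ and establish that $T_n$ is extremal for $\norm{P'}$ under this constraint. The facts about Chebyshev polynomials I would invoke are: $T_n$ has supremum norm $1$ on $[-1,1]$; it attains the values $\pm 1$ alternately at the $n+1$ extremal points $\eta_k = \cos(k\pi/n)$; and $\norm{T_n'} = |T_n'(\pm 1)| = n^2$. The extremality argument is a standard oscillation/sign-counting argument by contradiction: assuming $|P'(x_0)| > n^2$ at some $x_0 \in [-1,1]$, I would choose a scalar $c$ in terms of $P'(x_0)$ and $T_n'(x_0)$ so that the comparison polynomial $R(x) = c T_n(x) - P(x)$ has both $R'(x_0) = 0$ and, by the alternation of $c T_n$ strictly dominating $P$ at the $\eta_k$, at least $n$ sign changes on $[-1,1]$. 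Combining a zero between consecutive $\eta_k$ with the double zero forced at $x_0$ by Rolle overshoots the degree bound, so $R \equiv 0$, contradicting the assumption.

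The main obstacle is obtaining the uniform $n^2$ bound rather than the weaker pointwise Bernstein bound $|P'(x)| \leq n/\sqrt{1-x^2}$, which is what a naive application of Bernstein's trigonometric inequality to $f(\theta) = P(\cos\theta)$ produces and which blows up at the endpoints. The standard ways around this are either to argue separately in a neighbourhood of $\pm 1$, exploiting the fact that the extremal points of $T_n$ accumulate precisely there and that $T_n'$ attains its global maximum there, or to use the alternative Lagrange-interpolation proof expressing $P'(x) = \sum_{k=0}^{n} P(\eta_k) \ell_k'(x)$ at the nodes $\eta_k$ and verifying $\sum_k |\ell_k'(x)| \leq n^2$ on $[-1,1]$ by a direct, if somewhat tedious, evaluation. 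As the inequality is classical I would in practice simply invoke one of the standard references rather than reproduce the full proof.
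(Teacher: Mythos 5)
The paper does not prove this lemma at all---it simply cites Cheney's \emph{Introduction to Approximation Theory} and moves on---so your closing decision to invoke a standard reference is exactly the paper's own treatment. Your sketch of the underlying classical argument is nevertheless accurate: the affine rescaling from $[-1,1]$ to $[a,b]$ is stated correctly (with the derivative picking up the factor $\tfrac{2}{b-a}$ while the supremum norm is preserved), and you are right both that the Chebyshev polynomial $T_n$ is the extremal case with $\norm{T_n'}=n^2$ attained at $\pm 1$, and that the naive equioscillation/Rolle argument and the Bernstein estimate $|P'(x)|\leq n/\sqrt{1-x^2}$ each fail to deliver the uniform $n^2$ bound without a separate treatment near the endpoints (or the Lagrange-interpolation route via $P'=\sum_k P(\eta_k)\,\ell_k'$ and the bound $\sum_k|\ell_k'(x)|\leq n^2$). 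Since you flag this gap explicitly and fall back on a citation, there is nothing to object to.
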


\begin{Proposition}\label{Proposition: Poly and PAff}
We have $\Poly \not \leq \PAff$ and $\PAff \not \leq \Poly$.
\end{Proposition}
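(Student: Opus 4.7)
The plan is to refute both reductions by exhibiting explicit witness functions that have a trivial polytime name in one representation but provably require super-polynomially large names in the other: $f(x)=x^2$ for $\Poly\not\leq\PAff$ and $f(x)=|x|$ for $\PAff\not\leq\Poly$. In each case the source representation is immediate---$x^2$ is itself a dyadic polynomial; $|x|$ is itself piecewise affine with a single dyadic breakpoint at $0$---so the proof reduces to an approximation-theoretic lower bound on the target representation.

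For $\Poly\not\leq\PAff$ I would argue as follows. First, compute the minimum $L^\infty$-error of approximating $x^2$ on a subinterval $[a,b]$ by an affine function; by the equi-oscillation characterisation (or by completing the square) this error equals $(b-a)^2/8$. Second, conclude that any piecewise affine $g$ with $\|g-x^2\|_\infty\leq 2^{-k}$ must use pieces of length at most $2\sqrt{2}\cdot 2^{-k/2}$, and hence at least $2^{(k-1)/2}$ pieces on $[-1,1]$. The bit size of any $\PAff$-name at accuracy $2^{-k}$ is therefore $\Omega(2^{k/2})$, which is not bounded by any polynomial in $k$.

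For $\PAff\not\leq\Poly$ I would use Markov's inequality (Lemma~\ref{Lemma: Markov's inequality}) applied to the second derivative. Suppose $P$ has degree $n$ and $\|P-|x|\|_\infty\leq\epsilon$ on $[-1,1]$. For $h>0$, the bounds $P(\pm h)\geq h-\epsilon$ and $P(0)\leq\epsilon$ yield
\[P(h)-2P(0)+P(-h)\geq 2h-4\epsilon.\]
By the mean-value-theorem identity $P(h)-2P(0)+P(-h)=h^2 P''(\xi)$ for some $\xi\in(-h,h)$, taking $h=4\epsilon$ gives $\|P''\|_\infty\geq 1/(4\epsilon)$. On the other hand, two applications of Markov's inequality bound $\|P''\|_\infty\leq n^4\|P\|_\infty\leq 2n^4$ (using $\|P\|_\infty\leq 1+\epsilon\leq 2$). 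Combining, $n\geq 8^{-1/4}\epsilon^{-1/4}$, which for $\epsilon=2^{-k}$ is $\Omega(2^{k/4})$. Hence the degree, and therefore the bit size, of any $\Poly$-name of $|x|$ at accuracy $2^{-k}$ is exponential in $k$.

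The main obstacle is the $|x|$ half: Markov's inequality controls polynomial derivatives only in terms of polynomial values, so one needs a quantity that captures the ``corner'' of $|x|$ at $0$ and relates it to $\|P''\|_\infty$. The second symmetric difference $P(h)-2P(0)+P(-h)$ does exactly that: on $|x|$ itself it scales linearly in $h$ (as $2h$), while on a polynomial it scales as $h^2\|P''\|_\infty$, so balancing the two at $h\asymp\epsilon$ yields the desired lower bound $\|P''\|_\infty=\Omega(1/\epsilon)$, which via Markov's $n^4$ bound forces the degree to grow exponentially in the accuracy.
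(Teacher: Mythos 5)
Your proof is correct, and the overall strategy (witnesses $x^2$ and $|x|$ with approximation-theoretic lower bounds) is the same as the paper's. For $\Poly\not\leq\PAff$ the two arguments are essentially identical: the paper evaluates the chord interpolant of $x^2$ at midpoints to obtain the error $(x_{i+1}-x_i)^2/4$ per segment, and you invoke the best-affine error $(b-a)^2/8$; same asymptotics, same exponential lower bound on the number of pieces. For $\PAff\not\leq\Poly$ both arguments rest on Markov's inequality but route through it differently. The paper applies Markov once to the one-sided error polynomials $P_n(x)+x$ on $[-1,0]$ and $P_n(x)-x$ on $[0,1]$, deducing that $P_n'(0)$ would have to be within $d_n^2 2^{-n+1}$ of both $-1$ and $+1$, whence $d_n\geq 2^{(n-1)/2}$. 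You instead quantify the corner at $0$ via the second central difference, use the mean-value identity $P(h)-2P(0)+P(-h)=h^2P''(\xi)$ to force $\|P''\|_\infty\geq 1/(4\epsilon)$, and then apply Markov twice to get $\|P''\|\leq n^4\|P\|$. Both are sound; the paper's version is a little slicker (one application of Markov, and a sharper bound $\Omega(2^{k/2})$ versus your $\Omega(2^{k/4})$), while yours is the more "textbook" second-derivative argument for nonsmoothness and generalises naturally to other functions with Lipschitz but non-$C^1$ behaviour at a point.
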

\begin{proof}
The absolute value function $|x|$ is trivially polytime $\PAff$-computable. By Markov's inequality, it is not polytime $\Poly$-computable:
Assume that $(P_n)_n$ is a sequence of polynomials such that $\left|P_n(x) - |x|\right| < 2^{-n}$ for all $n \in \N$. Then on the interval $[-1,0]$ we have $P_n(x) + x < 2^{-n}$ and on the interval $[0,1]$ we have $P_n(x) - x < 2^{-n}$. 
Let $d_n$ denote the degree of $P_n \pm x$.
Applying Markov's inequality to the polynomial $P_n(x) + x$ on the interval $[-1,0]$ yields:
\[|P_n'(x) + 1| \leq 2d_n^2\norm{P_n(x) - |x|} \leq d_n^22^{-n + 1}. \]
Applying the inequality to $P_n(x) - x$ on $[0,1]$ yields:
\[|P_n'(x) - 1| \leq 2d_n^2\norm{P_n(x) - |x|} \leq d_n^22^{-n + 1}. \]
If $d_n \in o(2^n)$ then this implies that $P_n'(0)$ converges to $1$ and $-1$ at the same time, which is absurd. It follows that the size of $(P_n)_n$ grows exponentially in $n$. In particular, $(P_n)_n$ cannot be polytime computable.

For the converse direction we show that the polynomial $x^2$ does not have a polynomial size $\PAff$-name. Consider a piecewise linear approximation $L$ to $x^2$ to error $2^{-n}$ with breakpoints $x_1, \dots, x_m$ and values $y_1, \dots, y_m$. We have 
$|y_i - x_i^2| < 2^{-n}$, and hence for all $t \in [0,1]$:
\[|(1 - t) y_i + t y_{i + 1} - (1 - t) x_i^2 - t x_{i + 1}^2| < 2^{-n}.\] We may hence assume without loss of generality that $y_i = x_i^2$. 
Consider a segment $[x_i, x_{i + 1}]$. 
We have
\begin{align*}
2^{-n} &\geq \left|L - x^2\right| \\
&\geq \left|L\left(\tfrac{1}{2} x_i + \tfrac{1}{2} x_{i + 1}\right) - \left(\tfrac{1}{2} x_i + \tfrac{1}{2} x_{i + 1}\right)^2\right| \\
&= \left|\tfrac{1}{2} x_i^2 + \tfrac{1}{2} x_{i + 1}^2 - \left(\tfrac{1}{2} x_i + \tfrac{1}{2} x_{i + 1}\right)^2\right| \\ 
&= \frac{\left(x_{i + 1} - x_i\right)^2}{4}.
\end{align*}
Now, there exists a segment $[x_i, x_{i + 1}]$ 
with $|x_{i + 1} - x_i| \geq \tfrac{2}{m}$. 
It follows that $m \geq \sqrt{2}^{n}$.
\end{proof}

Together with a result which is proved in the next section (Corollary \ref{Corollary: PPoly = Frac}), we arrive at a complete overview of the reducibility lattice:

\begin{Theorem}\label{Theorem: reducibility between representations}
The following diagram shows all reductions between the representations introduced, up to taking the transitive closure:

\begin{center}
\begin{tikzcd}
\Poly \arrow{dr} \\
	 & \PPoly \arrow{r} & \Frac \arrow{r} \arrow{l} & \PFrac \arrow{l} \arrow{r} & \Fun \\
\PAff \arrow{ur}	 
\end{tikzcd}
\end{center}
No arrow reverses unless indicated.
\end{Theorem}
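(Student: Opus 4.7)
The plan is to assemble the positive reductions already at hand and then to verify that the remaining arrows do not reverse. The positive reductions drawn in the diagram are immediate from Proposition~\ref{Proposition: obvious reductions} ($\Poly \leq \PPoly \leq \PFrac \leq \Fun$, $\PAff \leq \PPoly$, and $\Frac \leq \PFrac$), Theorem~\ref{Theorem: Frac = PFrac} ($\PFrac \leq \Frac$), and the forward-referenced Corollary~\ref{Corollary: PPoly = Frac} ($\PPoly \equiv \Frac$). Taking the transitive closure then yields every reduction asserted by the diagram.

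For the non-reversals among $\Poly$, $\PAff$, $\PPoly$, $\Frac$, and $\PFrac$, I would invoke Proposition~\ref{Proposition: Poly and PAff}, which already establishes $\Poly \not\leq \PAff$ and $\PAff \not\leq \Poly$. From $\PAff \leq \PPoly$ combined with $\PAff \not\leq \Poly$, transitivity rules out $\PPoly \leq \Poly$; symmetrically, $\Poly \leq \PPoly$ together with $\Poly \not\leq \PAff$ rules out $\PPoly \leq \PAff$. Since $\PPoly \equiv \Frac \equiv \PFrac$, the analogous non-reductions from $\Frac$ and $\PFrac$ into $\Poly$ and $\PAff$ follow immediately.

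The main obstacle is the single separation $\Fun \not\leq \PFrac$, from which every remaining non-reversal follows: any hypothetical $\Fun \leq X$ for $X \in \{\Poly, \PAff, \PPoly, \Frac\}$ would, via $X \leq \PFrac$, yield $\Fun \leq \PFrac$. My plan is to exploit closure properties enjoyed by $\PFrac$-computable functions that are known to fail for polytime $\Fun$-computable ones. Since $\PFrac \equiv \PPoly$ and $\paramax$ is uniformly polytime $\PPoly$-computable by Corollary~\ref{Corollary: PPoly maximisation}, a reduction $\Fun \leq \PFrac$ would imply that $\paramax(f)$ is polytime $\Fun$-computable for every polytime $\Fun$-computable $f$. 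The Ko--Friedman construction \cite{KoFriedman} supplies a polytime $\Fun$-computable $f \in \Cinf$ for which this fails unless $\P = \NP$, and an analogous argument based on polytime integration on $\PPoly$ sharpens the obstruction to $\P = \sharp\P$. This yields the last missing non-reversal and closes out the diagram.
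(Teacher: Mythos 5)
Your assembly of the positive reductions and the non-reversals among $\Poly$, $\PAff$, $\PPoly$, $\Frac$, and $\PFrac$ matches the paper's proof exactly (transitivity from Proposition~\ref{Proposition: Poly and PAff}, plus the equivalences $\PPoly \equiv \Frac \equiv \PFrac$). The problem lies in your treatment of the one remaining separation, $\Fun \not\leq \PFrac$. Your argument via Corollary~\ref{Corollary: PPoly maximisation} and the Ko--Friedman function only shows that a reduction $\Fun \leq \PFrac$ would imply $\P = \NP$ (or $\P = \sharp\P$ via integration). That is a \emph{conditional} non-reversal: if $\P = \NP$ were true, your argument would leave open whether the arrow reverses. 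The theorem, however, asserts unconditionally that no unindicated arrow reverses, and the paper proves it unconditionally, so your proposal as it stands does not establish the stated result but only a weaker, hypothesis-laden version of it.

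The paper closes this gap with a size (information-theoretic) argument rather than a hardness argument: consider the family $2^{-n}\sin(2^{n}\pi x)$, which is uniformly polytime $\Fun$-computable in $n$, whereas any rational function approximating $2^{-n}\sin(2^{n}\pi x)$ on $[-1,1]$ to error $2^{-n-1}$ must have numerator degree greater than $2^{n}$ (the approximant is forced to alternate in sign across the exponentially many oscillations). Hence any $\Frac$- or $\PFrac$-name of these functions has exponential length in $n$, while a second-order polytime reduction applied to the polynomial-size $\Fun$-names could only produce names of polynomially bounded size; this contradiction needs no unproven complexity-theoretic assumption. If you want to keep your proof self-contained, replace the Ko--Friedman step by such a degree lower bound for rational approximation of a highly oscillatory, uniformly polytime $\Fun$-computable family.
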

\begin{proof}
Proposition \ref{Proposition: obvious reductions} establishes the more obvious reductions. 
Proposition \ref{Proposition: Poly and PAff} implies that $\PPoly$ does not reduce to either $\PAff$ or $\Poly$, for any such reduction would establish a reduction from $\Poly$ to $\PAff$ or vice versa.
The reduction $\PPoly \leq \Frac$ follows immediately from $\PFrac \equiv \Frac$. 
The converse is Corollary \ref{Corollary: PPoly = Frac} in Section \ref{Section: division}. 
To see that $\Fun \not \leq \PFrac$, consider the
family of functions $2^{-n}\sin(2^{n}\pi x)$. It is clearly uniformly polytime $\Fun$-computable in $n$, but not uniformly polytime $\Frac$-computable:
Any approximation to the function $2^{-n}\sin(2^{n}\pi x)$ on $[-1,1]$ to error $2^{-n-1}$ has at least $2^n$ zeroes, so that
any rational approximation to this error has a numerator of degree at least $2^{n}$.
\end{proof}

The class of polytime computable points with respect to the representation $\Poly$ has a useful analytic characterisation which was proved by Labhalla, Lombardi, and Moutai \cite{LabhallaEtAl} and strengthened by Kawamura, M\"uller, R\"osnick, and Ziegler \cite{KMRZ}.
For $B > 0$, $\ell > 0$, and $\gamma > 0$ let 
\[\Gev(B, \ell,\gamma) = \Set{f \in \Cinf}{\norm{f^{(n)}} \leq B\cdot \ell^n\cdot n^{\gamma n}} \]
denote the set of Gevrey's \cite{Gevrey} functions of level $\gamma$ with growth parameters $B$ and $\ell$. Note that $\ell = 1$ corresponds to the class of analytic functions. 
The results in \cite{LabhallaEtAl, KMRZ} imply in particular that the above hierarchy collapses on $\Gev(B, \ell,\gamma)$ for all fixed $B$, $\ell$, and $\gamma$:

\begin{Theorem}[\cite{LabhallaEtAl, KMRZ}]\label{Theorem: Poly = Fun on Gev(M,R,a)}
Let $B$, $\ell$, and $\gamma$ be fixed. On $\Gev(B, \ell,\gamma)$ we have
\[\Poly \equiv \PPoly \equiv \Frac \equiv \PFrac \equiv \Fun.\]
\end{Theorem}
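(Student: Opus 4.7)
The plan is to exploit the unconditional reductions already established. By Proposition~\ref{Proposition: obvious reductions}, Theorem~\ref{Theorem: Frac = PFrac} and Theorem~\ref{Theorem: reducibility between representations} we have
\[ \Poly \leq \PPoly \leq \Frac \equiv \PFrac \leq \Fun \]
on all of $\CI$. Hence to collapse the entire chain on $\Gev(M,R,\alpha)$ it suffices, for fixed parameters $M,R,\alpha$, to construct a polytime reduction $\Fun \leq \Poly$ on $\Gev(M,R,\alpha)$.

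The analytic input is a Jackson-type bound on the best uniform polynomial approximation $E_d(f) = \inf_{\deg P \leq d}\norm{f - P}$. From the derivative bound $\norm{f^{(k)}} \leq M R^k k^{\alpha k}$ and the classical estimate $E_d(f) \leq (C/d)^k\norm{f^{(k)}}$ one obtains $E_d(f) \leq M\,(CR\beta^\alpha)^k$ upon setting $k = \lfloor \beta d^{1/\alpha}\rfloor$. Choosing $\beta$ small enough that $CR\beta^\alpha < 1/2$ yields $E_d(f) \leq M \cdot 2^{-\Omega(d^{1/\alpha})}$, so the degree required to approximate $f$ to error $2^{-n}$ is $d(n) = O(n^\alpha)$, polynomial in $n$ since $\alpha$ is fixed.

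Constructively, given a $\Fun$-name of $f$ and an accuracy requirement $n$, we set $d = d(n)$, evaluate $f$ at the $d+1$ Chebyshev nodes $x_j = \cos(j\pi/d)$ to precision $2^{-n - q(d)}$ using the $\Fun$-name, and form the Chebyshev interpolating polynomial. Since the Lebesgue constant of Chebyshev interpolation grows only as $O(\log d)$, a correction $q(d) = O(\log d)$ is sufficient to absorb the amplification of evaluation errors through the interpolation step. Converting this polynomial from the Chebyshev to the monomial basis blows coefficients up to magnitude $2^{O(d)}$, but such numbers still admit a dyadic encoding of bit-length $O(d) = O(n^\alpha)$, polynomial in $n$; rounding each coefficient to dyadic precision matching the target error completes the $\Poly$-name.

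The main obstacle is the coordinated bookkeeping of three independent error sources---polynomial truncation, evaluation error at the nodes amplified by the Lebesgue constant, and coefficient rounding after the basis change---together with verifying that the total bit-work remains polynomial. The Jackson bound $d = O(n^\alpha)$ is precisely what prevents the exponential-in-$d$ blow-up inherent to the Chebyshev-to-monomial change of basis from destroying polynomiality of the resulting $\Poly$-name.
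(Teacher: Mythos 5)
Your proposal follows essentially the same route as the paper's proof sketch: reduce to showing $\Fun \leq \Poly$ on $\Gev(M,R,\alpha)$ via the already-established chain $\Poly \leq \PPoly \leq \Frac \equiv \PFrac \leq \Fun$, and then realise the reduction by Chebyshev interpolation, using the fixed parameters $M,R,\alpha$ to choose a polynomial-in-$n$ number of nodes in advance and the near-best-approximation property (Lebesgue constant $O(\log d)$) to control the error. The paper simply defers the quantitative details to \cite{KMRZ}, whereas you spell out the Jackson-type degree bound $d(n) = O(n^\alpha)$, the error budget for node evaluation, and the coefficient-size bookkeeping for the Chebyshev-to-monomial basis change; these are exactly the details that \cite{KMRZ} supplies, so the argument matches.

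One small point worth flagging in a fully rigorous write-up: a $\Fun$-name only yields values of $f$ at \emph{dyadic rational} sample points, while the Chebyshev nodes $\cos(j\pi/d)$ are not dyadic. One first computes dyadic approximations to the nodes and then uses the modulus of continuity encoded in the $\Fun$-name (or the Lipschitz bound $\norm{f'} \leq MR$ available from the Gevrey class) to absorb the resulting perturbation into the error budget. This is a routine refinement that does not affect the polynomial-time bound.
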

\begin{proof}[Proof sketch]
It suffices to show that $\Fun \leq \Poly$. Given a $\Fun$-name of a function $f \in \Gev(B, \ell,\gamma)$, compute a polynomial approximation via Chebyshev interpolation. Since the Chebyshev interpolation is a near-best approximation and $f$ can be approximated efficiently by polynomials, the number of nodes we need in order to compute a polynomial approximation to error $2^{-n}$ is bounded polynomially in $n$. Since we know the constants $B$, $\ell$, and $\gamma$, we can choose the right number of nodes in advance. See \cite[Proposition 21 (e), Theorem 23 (b)]{KMRZ} for details. Also note that the proof in \cite{KMRZ} establishes a much stronger uniform result, where $B$, $\ell$, $\gamma$ are not fixed but given as part of the input.
\end{proof}

\begin{Corollary}\label{Corollary: Gevrey function Fun-polytime iff Poly-polytime}
Let $f \in \Gev(B, \ell,\gamma)$ for some positive constants $B, \ell, \gamma$. 
Then $f$ is polytime computable if and only if it has a polytime computable $\Poly$-name.
\end{Corollary}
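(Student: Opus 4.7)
The plan is to derive this corollary as a direct consequence of Theorem~\ref{Theorem: Poly = Fun on Gev(M,R,a)}, combined with the earlier characterisation of polytime computable functions as those admitting a polytime computable $\Fun$-name (the corollary immediately following Proposition~\ref{Proposition: universal property of Fun}). Since both directions reduce to composing a polytime realiser of the identity with a polytime name, the argument will be a routine application of the fact that the composition of polytime computable functions is polytime computable.

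For the forward direction, I would start by assuming $f$ is polytime computable. By the universal-property corollary, $f$ admits a polytime computable $\Fun$-name $\varphi$. Since $f \in \Gev(M,R,\alpha)$, Theorem~\ref{Theorem: Poly = Fun on Gev(M,R,a)} provides a polytime $(\Fun,\Poly)$-realiser $F$ of the identity on $\Gev(M,R,\alpha)$. Applying $F$ to $\varphi$ produces a $\Poly$-name of $f$; as $F$ runs in time bounded by a second-order polynomial in $|\varphi|$ and $|\varphi|$ is first-order polynomial, $F(\varphi)$ is again first-order polytime computable.

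For the reverse direction, assume $f$ has a polytime computable $\Poly$-name. By the chain of reductions in Proposition~\ref{Proposition: obvious reductions}, namely $\Poly \leq \PPoly \leq \PFrac \leq \Fun$, there is a polytime $(\Poly, \Fun)$-realiser of the identity; applying it converts the polytime $\Poly$-name into a polytime $\Fun$-name of $f$. Then the universal-property corollary again gives that $f$ is polytime computable.

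The only mild subtlety, and it is nothing more than a bookkeeping point, is checking that composing a first-order polytime name with a second-order polytime realiser yields again a first-order polytime name; this is standard and follows because substituting a polynomial size bound $|\varphi|(n) \leq p(n)$ into a second-order polynomial $P(X,\Phi)$ produces a first-order polynomial in $n$. There is no real obstacle here beyond invoking the previously established results in the right order.
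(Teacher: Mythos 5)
Your argument is correct and matches the paper's (implicit) reasoning: the corollary is exactly the combination of Theorem~\ref{Theorem: Poly = Fun on Gev(M,R,a)} (or just the two reductions $\Fun \leq \Poly$ on $\Gev(M,R,\alpha)$ and $\Poly \leq \Fun$) with the characterisation of polytime computable functions via polytime $\Fun$-names, plus the standard observation that applying a second-order polytime realiser to a polytime name yields a polytime name. Nothing further is needed.
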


\section{Bounded division for piecewise polynomials}\label{Section: division}

We now establish the reduction $\Frac \leq \PPoly$
by giving a polytime division algorithm for piecewise polynomials.
The algorithm will first compute a linear interpolation of the divisor
and then employ an iteration to improve the approximation.
As we cannot evaluate the divisor to infinite precision,
we have to use the following notion:
Let $f \colon [-1,1] \to \R$ be a continuous function. 
Let $x_1, \dots, x_m \in [-1,1]$. 
A \emph{linear $\varepsilon$-interpolation} of 
$f$ at $x_1, \dots x_m$ is a piecewise linear function $L$ with breakpoints 
$x_1, \dots, x_m$ 
which satisfies
$|L(x_i) - f(x_i)| < \varepsilon$.

\begin{Algorithm}[Bounded Division]\label{Algorithm: PPoly division}\hfill
\begin{itemize}
\item 
	Input: A non-constant polynomial $P \in \D[x]$ with $P(x) \geq 1$ on $[-1,1]$. 
	An accuracy requirement $n \in \N$.
\item Output: A piecewise polynomial approximation to $1/P$ on $[-1,1]$ to error $2^{-n}$.
\item Procedure: 
\begin{itemize}
\item Compute a Lipschitz constant $\ell$ of $P$ using Proposition \ref{Proposition: Polynomial Lipschitz constant} and use it to compute an upper bound on the range of $P$ of the form $[1, 2^r]$ for some $r \in \N$.
\item Use Theorem \ref{Theorem: polynomial equation solving} to compute interval upper bounds on the solutions to the equations
\begin{align*}
P'(x) &= 0, \\
P(x)  &= 2^{k} \hspace{5em} \text{for $0 \leq k \leq r$,} \\
P(x)  &= 2^{k + 2}/3 \hspace{3em} \text{for $0 \leq k < r$,}
\end{align*} 
to error $2^{-r - 3}/\ell$.
By this we mean a list of intervals such that each interval contains a solution, each solution is contained in an interval, and each interval has diameter at most $2^{-r - 3}/\ell$.
\item Sort the intervals together with the boundary points (viewed as degenerate intervals) in ascending order to get a list 
\[[-1,-1] = I_1 < I_2 < \dots < I_m = [1,1].\] 
If two intervals should overlap, refine them such that they are either disjoint or their union has diameter smaller than $2^{-r - 3}/\ell$. In the latter case replace them with their union.
\item Compute a linear $2^{-r - 4}$-interpolation $Q_0$ of $1/P$ at the centres of the intervals.
\item Let $N = \lceil\log_2(3n)\rceil$.
\item For $k = 1, \dots, N$:
\begin{itemize}
\item Put $Q_{k + 1} = 2Q_k - PQ_k^2$.
\end{itemize}
\item Output $Q_N$.
\end{itemize}
\end{itemize}
\end{Algorithm}

\begin{Remark}\hfill
\begin{enumerate}
\item The iteration employed in Algorithm \ref{Algorithm: PPoly division} is the well-known Newton-Raphson division method.
\item While, by Lemma \ref{Lemma: PPoly division polytime} below, Algorithm \ref{Algorithm: PPoly division} already runs in polynomial time, its practical performance can be improved significantly by employing, within the iteration,
size-reduction techniques
such as degree reduction and sweeping, maintaining rigorous error bounds.
\item The resource usage of Algorithm \ref{Algorithm: PPoly division} is mainly dominated by the multiplication of polynomials with potentially large degree within the Newton-Raphson iteration. 
While the degrees can sometimes be kept small by the aforementioned size-reduction techniques, 
there are practical instances of the problem where the degrees grow quite large, 
resulting in poor practical performance, despite the algorithm being polytime. 
For more details, see Section \ref{Section: Experiments}.
\item If $P \in \D[x]$ is any non-constant polynomial with $P(x) \geq b > 0$ on $[-1,1]$, we can apply Algorithm \ref{Algorithm: PPoly division} to $P/b$ and use it to compute an approximation to 
$1/P(x) = (1/b)\left/(P(x)/b)\right.$. If we know that $P(x) > 0$, without knowing a bound, we can use Corollary \ref{Corollary: PPoly maximisation} to find a lower bound $b > 0$, but since we need to witness that $b$ is above $0$, the complexity depends additionally on $\log_2(\inf_{x \in [-1,1]} P(x))$. 
\end{enumerate}
\end{Remark}

\begin{figure}[h]
\centering
\includegraphics[scale = 0.63]{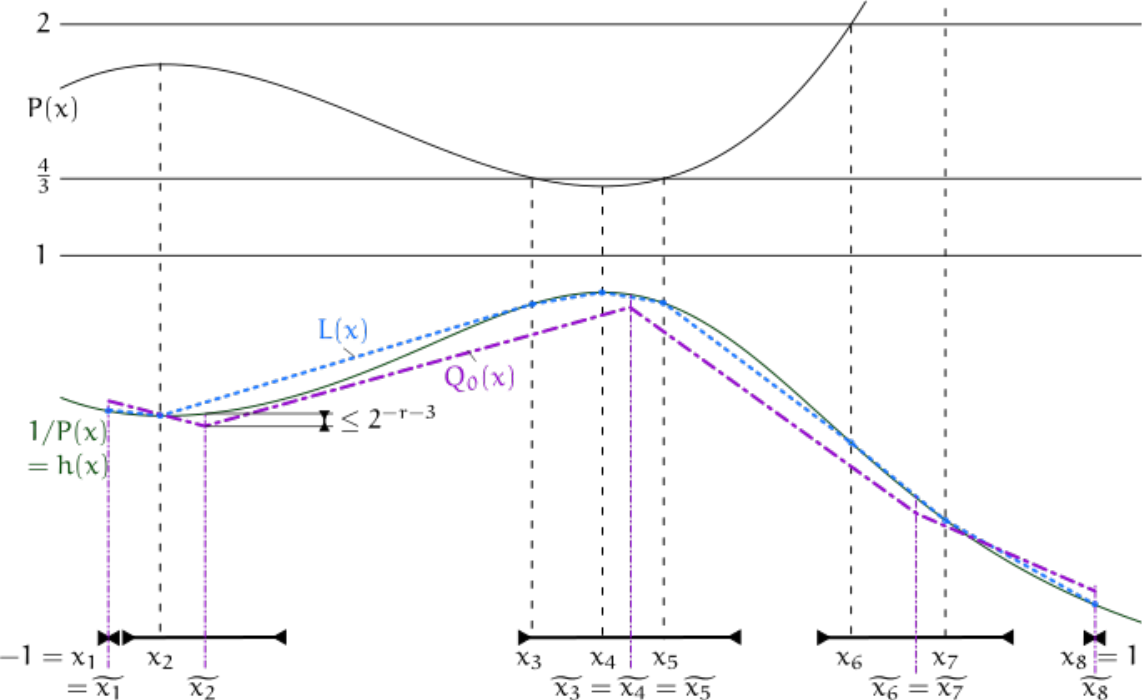}
\caption{Overview of the notation used in the correctness proof of Algorithm \ref{Algorithm: PPoly division}
(Lemma \ref{Lemma: correctness of bounded division}).}
\end{figure}

\begin{Lemma}\label{Lemma: correctness of bounded division}
Algorithm \ref{Algorithm: PPoly division} is correct.
\end{Lemma}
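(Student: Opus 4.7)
My plan is to observe that the algorithm is just Newton--Raphson iteration for $1/P$, so correctness reduces to two ingredients: the quadratic contraction of the iteration, and a uniform bound on the initial error. Setting $e_k = 1 - PQ_k$, direct algebra gives
\[
e_{k+1} = 1 - P\left(2Q_k - PQ_k^2\right) = (1 - PQ_k)^2 = e_k^2,
\]
so $\norm{e_N} \leq \norm{e_0}^{2^N}$ and, since $P \geq 1$ on $[-1,1]$, $\norm{Q_N - 1/P} \leq \norm{e_N}$. Thus it suffices to establish a uniform bound $\norm{e_0} \leq c$ for some $c < 1$, and then to check that the chosen $N$ satisfies $c^{2^N} \leq 2^{-n}$.

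The core of the argument will be the following geometric claim: on every piece $[c_j, c_{j+1}]$ of $Q_0$ the ratio $P_{\max}/P_{\min}$ of extremes of $P$ over the piece is at most $3/2 + O(2^{-r})$. This holds because every critical point of $P$ and every preimage of a value from $\{2^k : 0 \leq k \leq r\} \cup \{2^{k+2}/3 : 0 \leq k < r\}$ is contained in some interval $I_j$; on the ``gap'' between consecutive intervals $I_j$ and $I_{j+1}$, $P$ is therefore strictly monotone and takes values only in some interval $[v_l, v_{l+1}]$ bounded by two consecutive listed values, whose ratio is $4/3$ or $3/2$. Each $I_j$ has diameter at most $2^{-r-4}/\ell$, so the Lipschitz bound $\ell$ forces $P$ to vary by at most $2^{-r-4}$ inside $I_j$; this is the source of the $O(2^{-r})$ correction. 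Boundary pieces near $\pm 1$ and pieces adjacent to critical-point intervals are handled analogously, using that $P(\pm 1)$ and any critical value still lie between two consecutive listed levels.

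Given the ratio bound, let $\tilde{Q}_0$ denote the exact piecewise-linear interpolant of $1/P$ at the centres $c_j$. Both $1/P(x)$ and $\tilde{Q}_0(x)$ lie in $[1/P_{\max}, 1/P_{\min}]$ on each piece, so
\[
P(x)\cdot\left|\tilde{Q}_0(x) - 1/P(x)\right| \leq P_{\max}\left(\frac{1}{P_{\min}} - \frac{1}{P_{\max}}\right) = \frac{P_{\max}}{P_{\min}} - 1 \leq \tfrac{1}{2} + O(2^{-r}).
\]
The algorithm's approximate interpolant satisfies $\norm{Q_0 - \tilde{Q}_0} \leq 2^{-r-3}$, contributing at most $P_{\max}\cdot 2^{-r-3} \leq 2^r \cdot 2^{-r-3} = 1/8$ to $\norm{e_0}$. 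Hence $\norm{e_0} \leq 5/8 + O(2^{-r})$, which is bounded by some fixed $c < 1$. Since $\log_2(1/c) > 1/6$, the choice $N = \lceil \log_2(3n)\rceil + 1$ yields $2^N \geq 6n$, so $\norm{Q_N - 1/P} \leq c^{2^N} \leq 2^{-n}$ as required.

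The main obstacle will be the geometric book-keeping: verifying rigorously that the sorting, refinement, and merging steps of the algorithm really produce a partition on whose pieces $P$ varies by at most a factor of $3/2$ up to negligible corrections. I expect several edge cases to require separate inspection: boundary pieces at $\pm 1$; pieces whose $I_j$ contains a critical point rather than a level-set preimage, where $P(c_j)$ need not equal a listed value but is still squeezed between two adjacent ones; and pieces arising from intervals that were merged during refinement. Once the geometric ratio bound is in place, the remainder is a routine Newton--Raphson convergence calculation.
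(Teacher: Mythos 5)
Your proposal is correct and takes essentially the same route as the paper's proof: quadratic contraction of the Newton--Raphson iteration together with a constant relative-error bound for the initial piecewise-linear approximant, obtained from the observation that between consecutive levels $2^k$, $2^{k+2}/3$, $2^{k+1}$ the polynomial $P$ varies by at most a factor $3/2$, with the inexactness of the nodes (location within intervals of diameter $2^{-r-4}/\ell$, values to accuracy $2^{-r-3}$) contributing only small corrections controlled via the Lipschitz constant and the range bound $2^r$. The only organisational difference is that you track $e_k = 1 - PQ_k$ and compare $Q_0$ against the exact interpolant at the interval centres (getting roughly $5/8$ plus an $O(2^{-r})$ term), whereas the paper tracks $\left|Q_k - 1/P\right|$ and compares against the interpolant at the exact level points (getting $3/4$); both constants comfortably meet the requirement imposed by $N = \lceil\log_2(3n)\rceil + 1$.
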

\begin{proof}
Let $-1 = a_1 < a_2 < \dots < a_m = 1$ be the union of the boundary points and the zeroes of $P'(x)$, sorted in an increasing order, so that $1/P$ is monotone on each interval $[a_i, a_{i + 1}]$. 
On $[a_i, a_{i + 1}]$, let
\[a_i = b^i_1 < b^i_2 < \dots < b^i_{k_i} = a_{i + 1} \]
be the solutions of the equations $P(x) = 2^k$ and $P(x) = 2^{k + 2}/3$, where $k \in [0, r]$, together with the boundary points. 
Let
\[-1 = x_1 < x_2 < \dots < x_l = 1 \]
denote the $b^i_j$'s, sorted in an increasing order. 
Let $L$ be the linear interpolation of $1/P$ in the $x_i$'s. 

The proof relies on the following two inequalities:
\begin{itemize}
\item \textbf{Claim 1:} $|L(x) - 1/P(x)| < 1/(2P(x))$ for all $x \in [-1,1]$.
\item \textbf{Claim 2:} $|Q_0(x) - L(x)| \leq 1/(4P(x))$ for all $x \in [-1,1]$.
\end{itemize}
We prove by induction on $k$ the inequality $|Q_k(x) - 1/P(x)| \leq (3/4)^{2^{k}} \cdot (1/P(x))$
for all $x \in [-1,1]$.
The base case is established by combining the above claims using the triangle inequality.
The induction step is given below:
\begin{align*}
|Q_{k + 1}(x) - 1/P(x)| &= |2Q_k(x) - P(x)Q_k(x)^2 - 1/P(x)|
\\  &= |P(x)| \cdot |2Q_k(x)/P(x) - Q_k(x)^2 - (1/P(x))^2| 
\\  &= |P(x)|\cdot|Q_k(x) - 1/P(x)|^2 
\\  &\leq (3/4)^{2^{k + 1}} \cdot (1/P(x)).
\end{align*}
Using the definition $N = \lceil\log_2(3n)\rceil$ 
we obtain $|Q_N(x) - 1/P(x)| \leq 2^{-n}$
which finishes the proof.

\paragraph{Proof of Claim 1.}
We claim that $|L(x) - 1/P(x)| < 1/(2P(x))$ for all $x \in [-1,1]$. Consider an interval of the form $[x_i, x_{i + 1}]$. Since $1/P$ is monotone on the interval, we have
\[|L(x) - 1/P(x)| \leq |1/P(x_{i}) - 1/P(x_{i + 1})|. \]
If $x_i$ and $x_{i + 1}$ are inner points of the interval $[-1,1]$ then there are four cases:
\begin{enumerate}
\item $P(x_i) = 2^k$, $P(x_{i + 1}) = 2^{k + 2}/3$. We have:
\[|1/P(x_{i}) - 1/P(x_{i + 1})| = |2^{-k} - 3\cdot2^{-k - 2}| = 2^{-k - 2}. \]
Since $P$ is monotonically increasing, we have:
\[1/(2P(x)) \geq 1/(2P(x_{i + 1})) = \tfrac{3}{2}2^{-k - 2} \geq 2^{-k - 2}. \]
\item $P(x_i) = 2^k$, $P(x_{i + 1}) = 2^{(k - 1) + 2}/3$. We have:
\[|1/P(x_i) - 1/P(x_{i + 1})| = |2^{-k} - 3\cdot2^{-k - 1}| = 2^{-k - 1}. \]
Since $P$ is monotonically decreasing, we have:
\[1/(2P(x)) \geq 1/(2P(x_i)) = 2^{-k - 1}. \]
\item $P(x_i) = 2^{k + 2}/3$, $P(x_{i + 1}) = 2^{k + 1}$. We have:
\[|1/P(x_i) - 1/P(x_{i + 1})| = |3\cdot 2^{-k - 2} - 2^{-k - 1}| = 2^{-k - 2}. \]
Since $P$ is monotonically increasing, we have:
\[1/(2P(x)) \geq 1/(2P(x_{i + 1})) = 2^{-k - 2}. \]
\item $P(x_i) = 2^{k + 2}/3$, $P(x_{i + 1}) = 2^{k}$. We have:
\[|1/P(x_i) - 1/P(x_{i + 1})| = |3\cdot2^{-k-2} - 2^{-k}| = 2^{-k - 2}. \]
Since $P$ is monotonically decreasing, we have:
\[1/(2P(x)) \geq 1/(2P(x_i)) = \tfrac{3}{2}2^{-k-2} \geq 2^{-k - 2}. \]
\end{enumerate}
The cases where $x_i$ or $x_{i + 1}$ is a boundary point are treated similarly.

\paragraph{Proof of Claim 2.}
We claim that $|Q_0(x) - L(x)| < 1/(4P(x))$ for all $x \in [-1,1]$.
By construction
every $x_i$ is contained in some interval $I_j$ which is computed by Algorithm \ref{Algorithm: PPoly division}.
Conversely every interval $I_j$ contains some $x_i$.
Let $\widetilde{x}_i$ denote the centre of the interval $I_j$ which contains $x_i$.
Note that different $x_i$'s could yield equal $\widetilde{x}_i$'s.

As both $L$ and $Q_0$ are piecewise linear, the distance $|L(x) - Q_0(x)|$ attains its maximum in one of the $x_i$'s or one of the $\widetilde{x}_i$'s.

Let us introduce some notation to improve the readability of the following estimates.
Write $h(x) = 1/P(x)$. 
Write $\varepsilon_x = 2^{-r - 4}/\ell$ for the distance between $x_i$ and $\widetilde{x}_i$.
Write $\varepsilon_y = 2^{-r - 4}$ for the distance between $Q_0(\widetilde{x}_i)$ and $h(\widetilde{x}_i)$.

We find:
\begin{align*}
	|Q_0(\widetilde{x}_i) - L(\widetilde{x}_i)|
	&\leq |Q_0(\widetilde{x}_i) - h(\widetilde{x}_i)|
	+ |h(\widetilde{x}_i) - L(x_i)| + |L(x_i) - L(\widetilde{x}_i)| \\
	&= |Q_0(\widetilde{x}_i) - h(\widetilde{x}_i)| + |h(\widetilde{x}_i) - h(x_i)| + |L(x_i) - L(\widetilde{x}_i)| \\
	&\leq \varepsilon_y + \ell \varepsilon_x + \ell \varepsilon_x\\
	&\leq 2^{-r - 4} + 2^{-r - 3} \\
	&\leq \tfrac{1}{4P(x)}
\end{align*}
The last line uses that $r$ is by definition an upper bound on $\log_2 P(x)$.
The estimate of the second factor in the third-to-last line uses the fact that any Lipschitz constant for $h$
is also a Lipschitz constant for $L$.
Note that since $P$ is bounded by $1$ from below, any Lipschitz constant for $P$ on $[-1,1]$ is also a Lipschitz constant for $1/P$ on $[-1,1]$.

To estimate $|Q_0(x_i) - L(x_i)|$ we need to find a bound on the Lipschitz constant of $Q_0$.
As $Q_0$ is piecewise linear, it suffices to compute a number $\ell_Q$ satisfying
\[
	|Q_0(\widetilde{x}_i) - Q_0(\widetilde{x}_{i + 1})| \leq \ell_Q |\widetilde{x}_i - \widetilde{x}_{i + 1}|
\]
for all $i$.

If $\widetilde{x}_i = \widetilde{x}_{i + 1}$ then any non-negative $\ell_Q$ will do.
Hence let us assume that $\widetilde{x}_{i} \neq \widetilde{x}_{i + 1}$.
Then by construction $|\widetilde{x}_i - \widetilde{x}_{i + 1}| > 2 \varepsilon_x$.
We calculate:
\begin{align*}
	|Q_0(\widetilde{x}_i) - Q_0(\widetilde{x}_{i + 1})| 
	&\leq
	 |Q_0(\widetilde{x}_i) - h(\widetilde{x}_{i})| 
	 + |h(\widetilde{x}_i) - h(\widetilde{x}_{i + 1})| 
	 + |h(\widetilde{x}_{i + 1}) - Q_0(\widetilde{x}_{i + 1})| \\
	&\leq 
	2 \varepsilon_y + \ell |\widetilde{x}_i - \widetilde{x}_{i + 1}| \\
	&\leq 
	\left(\tfrac{\varepsilon_y}{\varepsilon_x} + \ell\right)  |\widetilde{x}_i - \widetilde{x}_{i + 1}|.
\end{align*}

We now obtain:
\begin{align*}
	|Q_0(x_i) - L(x_i)|
	&\leq 
	|Q_0(x_i) - Q_0(\widetilde{x}_i)|
	+	|Q_0(\widetilde{x}_i) - h(\widetilde{x}_i)|
	+	|h(\widetilde{x}_i) - h(x_i)|
	+   |h(x_i) - L(x_i)|\\
	&\leq 
	\left(\tfrac{\varepsilon_y}{\varepsilon_x} + \ell\right) \varepsilon_x
	+	\varepsilon_y
	+   \ell \varepsilon_x\\
	&= 2 \varepsilon_y + 2\ell \varepsilon_x \\
	&= 2^{-r-3} + 2^{-r-3} \\
	&\leq \tfrac{1}{4P(x)}
\end{align*}
%Proof of the first fact:
%\[
%	L(x) = \frac{x - x_i}{x_{i + 1} - x_i} h(x_{i + 1}) + \frac{x_{i + 1} - x}{x_{i + 1} - x_{i}} h(x_i)
%\]
%Thus the derivative is
%\[
%	\tfrac{d}{dx} L(x) = 
%	\frac{1}{x_{i + 1} - x_i} h(x_{i + 1}) - \frac{1}{x_{i + 1} - x_{i}} h(x_i).
%\]
%Hence, if $\ell_h$ is a Lipschitz constant for $h$, then:
%\[
%	|\tfrac{d}{dx} L(x)| \leq \ell_h.
%\]
%The claim now follows from the mean value theorem.
\end{proof}

Let us now show that Algorithm \ref{Algorithm: PPoly division}
runs in polynomial time.
The following lemma ensures that the initial approximation
can be computed in polynomial time:

\begin{Lemma}
There exists a polytime algorithm which takes as input a $\Fun$-name of 
a function $f \in \CI$,
a list of points $x_1, \dots, x_m \in [-1,1]$, 
and an error bound $\Q \ni \varepsilon > 0$, 
and returns as output a linear $\varepsilon$-interpolation of $f$ at $x_1, \dots, x_m$.
\end{Lemma}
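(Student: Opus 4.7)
The plan is the natural one: directly extract samples of $f$ at the breakpoints using the $\Fun$-name and connect them by line segments. I take the $x_i$ to be given as dyadic rationals in their standard encoding, which is the natural reading given that the output is a piecewise linear function whose breakpoints must themselves be dyadic rationals to fit into the encodings of $\PAff$ used elsewhere in the paper. (If instead the $x_i$ were supplied as $\rho$-names of real numbers, then the breakpoints of any output in standard encoding would necessarily be approximations to the $x_i$, which contradicts the statement.)

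The main steps I would carry out are: (i) sort the $x_i$ in increasing order, in polynomial time; (ii) compute $k \in \N$ with $2^{-k} < \varepsilon$ from the input $\varepsilon \in \Q$; (iii) for each $i = 1, \dots, m$, query the oracle $\varphi$ once at the string $\langle u_i, 0^k \rangle$, where $u_i$ is a $\nu_\D$-name of $x_i$, and set $y_i := \nu_\D(\varphi(\langle u_i, 0^k \rangle))$. By the defining property of $\Fun$-names, $|y_i - f(x_i)| \leq 2^{-k} < \varepsilon$; (iv) output the piecewise linear function $L$ with $L(x_i) = y_i$ that is linear on each $[x_i, x_{i+1}]$ and extended by constants (namely $y_1$ and $y_m$) outside $[x_1, x_m]$ to cover all of $[-1,1]$. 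In the standard encoding by breakpoints together with the slopes and intercepts of each segment, each piece is obtained from $x_i, x_{i+1}, y_i, y_{i+1}$ by a constant number of dyadic-rational arithmetic operations, and the resulting encoding has polynomial size in the input.

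The polynomial-time bound is then immediate: all dyadic arithmetic has polynomial bit-cost in the input sizes, the query strings $\langle u_i, 0^k \rangle$ have polynomial size in the input, and the cost of each of the $m$ oracle queries is bounded by $|\varphi|$ evaluated at a polynomial-size argument, which is by construction a second-order polynomial in $|\varphi|$ and the input length. I do not anticipate any real obstacle here; the lemma essentially amounts to a direct unwrapping of the defining property of $\Fun$-names, together with care about how the output is encoded.
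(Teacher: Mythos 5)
Your proposal is correct and is precisely the direct argument the paper leaves implicit: the lemma is stated without proof, and the intended justification is exactly your query-and-connect construction, reading the breakpoints as (dyadic) rationals, querying the $\Fun$-oracle at each $x_i$ with accuracy $2^{-k}<\varepsilon$, and outputting the resulting piecewise linear interpolant, all of which is clearly second-order polytime. The only cosmetic imprecision is that the slopes $(y_{i+1}-y_i)/(x_{i+1}-x_i)$ are rational rather than dyadic, but since the lemma does not constrain the output encoding, representing the interpolant by its breakpoints and values (or by rational coefficients) is harmless.
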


\begin{Lemma}\label{Lemma: PPoly division polytime}
Algorithm \ref{Algorithm: PPoly division} runs in polynomial time.
\end{Lemma}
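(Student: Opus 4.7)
The plan is to verify that every preparatory step of Algorithm \ref{Algorithm: PPoly division} is polytime, that the initial iterate $Q_0$ has size polynomial in $|P|$ and $n$, and that the $N = O(\log n)$ Newton--Raphson updates $Q_{k+1} = 2Q_k - PQ_k^2$ preserve this polynomial size bound. Since each update is itself computable in time polynomial in the sizes of $P$ and $Q_k$, polytime computability of the whole procedure will follow.

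First I would handle the preparatory steps. The Lipschitz constant $\ell$ is polytime computable by Proposition \ref{Proposition: Polynomial Lipschitz constant}, and $r$ may be taken equal to $\lceil \log_2(\ell + |P(0)|) \rceil$, which is polynomial in $|P|$. Since $r$ is polynomially bounded, the $O(r)$ polynomial equations in the algorithm can each be solved to the required accuracy $2^{-r-4}/\ell$ in polytime by Theorem \ref{Theorem: polynomial equation solving}. Each equation contributes at most $\deg P$ intervals, so after merging and sorting we are left with $m = O(r \cdot \deg P)$ intervals, a quantity polynomial in $|P|$. For $Q_0$, I would first construct a polytime $\Fun$-name of $1/P$ on $[-1,1]$: since $P \geq 1$, a dyadic approximation to $1/P(x)$ is obtained by inverting a dyadic approximation of $P(x)$, and a modulus of continuity for $1/P$ is read off from $\ell$ via the mean value theorem. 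Applying the lemma preceding the algorithm then yields $Q_0$ in polynomial time, with $m$ pieces and two dyadic coefficients per piece of bit-size $O(r)$, so $Q_0$ has polynomial total size.

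The heart of the argument is to track the sizes through the Newton iterations. The update $Q_{k+1} = 2Q_k - P Q_k^2$ is applied piecewise and does not introduce new breakpoints, so the number of pieces remains $m$ throughout. On each piece, the degree obeys $d_{k+1} \leq 2 d_k + \deg P$, hence $d_k = O(2^k (d_0 + \deg P))$ and $d_N = O(n \cdot \operatorname{poly}(|P|))$. The coefficient bit-size satisfies $c_{k+1} \leq c_P + 2 c_k + O(\log(d_k + \deg P))$, so iterating $N$ times yields $c_N = O(n \cdot \operatorname{poly}(|P|))$. Therefore every intermediate $Q_k$ and the final $Q_N$ has polynomial total bit-size, and each update can be carried out by standard polynomial arithmetic in time polynomial in the sizes of $P$ and $Q_k$.

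The main obstacle is precisely that the Newton step squares the approximation error but also roughly doubles both the degree and the coefficient bit-size. The reason the algorithm is polytime at all is that only $N = \lceil \log_2(3n) \rceil + 1$ iterations are needed, so the geometric growth of $(d_k)$ and $(c_k)$ still lands at a polynomial in $n$. Combining this with the polynomial bounds obtained in the preparatory phase gives a total running time bounded by a polynomial in $|P|$ and $n$, as required.
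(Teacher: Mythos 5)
Your proposal is correct and follows essentially the same route as the paper's proof: polytime preparatory steps and a polynomial-size $Q_0$, constancy of the number of pieces, degree roughly doubling per Newton step ($d_{k+1}\leq 2d_k+\deg P$), and a coefficient bit-size recurrence of the form $c_{k+1}\leq 2c_k + c_P + (\text{low-order terms})$, all tamed by the fact that only $N=\lceil\log_2(3n)\rceil+1$ iterations are performed. The only cosmetic difference is that the paper bounds the coefficient growth via the explicit estimate $c_{AB}\leq c_A+c_B+\min\{t_A,t_B\}$ and an explicit induction, while you use an equivalent asymptotic recurrence; both yield the same polynomial bound.
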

\begin{proof}
The size of the Lipschitz constant $\ell$ of $P$ is bounded polynomially in the degree and the size of its coefficients. The bound $[1, 2^r]$ on the range can be given as $r = \lceil\log_2 (\ell + 1)\rceil$. 
Hence there are only polynomially many equations to solve, and since the algorithm in Theorem \ref{Theorem: polynomial equation solving} runs in polynomial time, the overall complexity of the construction of the initial approximation $Q_0$ is polynomial. 
In particular, the number of segments of $Q_0$ is polynomial in the size of $P$. 
The degree of the $k^{\text{th}}$ approximation is $(2^{k} - 1)\deg P + 2^k$, 
so the degree of the $N^\text{th}$ approximation is in
$O \left((6n + 1)\deg P + 6n\right)$, 
which is polynomial in the size of $P$ and $n$. 
The number of segments does not change during the iteration.

It remains to estimate the size of the coefficients.
For a polynomial $A$, encoded as a list of dyadic rational numbers in standard notation,
let $t_A$ denote the number of terms of $A$, \ie $t_A = \deg A + 1$,
and let $c_A$ (by abuse of notation)
denote the bitsize of the coefficients of the given encoding of $A$.
Let $c_k = c_{Q_k}$ and $t_k = t_{Q_k}$. 
We have $t_k = \deg(Q_k) + 1 = (2^{k} - 1)\deg P + 2^k + 1$. 
If $A$ and $B$ are polynomials, then $c_{AB} \leq c_A + c_B + \min\{t_A,t_B\}$ so that 
\[c_{Q_k^2} \leq 2c_k + t_k \]
and hence
\[c_{k + 1} = c_{2Q_k - PQ_k^2} \leq \max\left\{c_k + 1, c_P + (2c_k + t_k) + \min\{t_k,t_P\}\right\} \leq c_P + 2c_k + 2t_P. \]
it follows by induction that
\[c_k \leq (2^k - 1)c_P + 2^kc_0 + 2(2^k - 1)t_P\]
Hence we have:
\[c_N \in O \left( (n + 1)c_P + n c_0 + 2(n + 1) t_P \right) \]
which is polynomial in $c_P$ and $n$.
\end{proof}

By applying Algorithm \ref{Algorithm: PPoly division} piece-by-piece we obtain:

\begin{Theorem}\label{Theorem: PPoly division}
Bounded division,
\[\div \colon \subseteq \CI \times \CI \to \CI, \; (f,g) \mapsto f / g, \]
where
\[\dom(\div) = \Set{(f,g) \in \CI \times \CI}{g(x) \geq 1 \text{ for all } x \in [-1,1]} \]
is uniformly $(\PPoly,\PPoly)$-polytime computable. 
\end{Theorem}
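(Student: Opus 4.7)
The plan is to apply Algorithm~\ref{Algorithm: PPoly division} piece-by-piece. Given \(\PPoly\)-names of \(f\) and \(g\) together with a target accuracy \(n \in \N\), first query the input names for piecewise polynomial approximations \(F \approx f\) and \(G \approx g\) to error \(\varepsilon = 2^{-n-c}\), where \(c\) will be fixed in the error analysis below. Form the common refinement of their breakpoint sets, obtaining intervals \(I_1, \dots, I_k\) on each of which \(F|_{I_j} = P_j\) and \(G|_{I_j} = Q_j\) are dyadic polynomials. The number \(k\) is bounded by the sum of the breakpoint counts of \(F\) and \(G\), which is polynomial in the sizes of the queried names.

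For \(\varepsilon \leq 1/2\) we have \(Q_j \geq 1/2\) on \(I_j\) since \(g \geq 1\). Apply Algorithm~\ref{Algorithm: PPoly division} to \(2 Q_j\) (which satisfies the hypothesis \(2Q_j \geq 1\)) after linearly rescaling \(I_j\) to \([-1,1]\); this change of variable preserves dyadic rationality of coefficients. The output, suitably rescaled back, is a piecewise polynomial \(R_j\) approximating \(1/Q_j\) on \(I_j\) to a prescribed error \(2^{-n-c'}\). Form the piecewise polynomial product \(H_j = P_j \cdot R_j\) on \(I_j\), and glue the \(H_j\) together to form a piecewise polynomial \(H\) on \([-1,1]\), which will be the output.

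For the error analysis, split
\[
    \left|\frac{f}{g} - H\right|
    \leq \frac{|f - F|}{g} + |F| \cdot \left|\frac{1}{g} - \frac{1}{G}\right| + |F| \cdot \left|\frac{1}{G} - R\right|,
\]
where \(R\) denotes the gluing of the \(R_j\). Using \(g \geq 1\), \(G \geq 1/2\), and an a priori bound \(B \geq \|f\|\) obtained in polynomial time by evaluating a crude piecewise polynomial approximation to \(f\) and summing absolute values of its coefficients on each piece, each of the three summands can be forced below \(2^{-n}/3\) by choosing \(c, c'\) depending polynomially on \(\log_2 B\) and \(n\). The overall polytime complexity then follows since the number of pieces, their degrees, and the sizes of all coefficients remain polynomially bounded throughout --- multiplication of two polynomials and rescaling are polytime operations --- and Algorithm~\ref{Algorithm: PPoly division} is polytime per piece by the preceding two lemmas.

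The main obstacle is the bookkeeping: the internal accuracies \(\varepsilon, c, c'\) must be chosen so that the three error contributions combine to at most \(2^{-n}\), while simultaneously keeping the size of the output piecewise polynomial \(H\) polynomial in the input sizes and \(n\). The a priori sup-norm bound on \(f\) is the only genuinely non-trivial ingredient, but as noted above, it is easily extracted from the \(\PPoly\)-name in polynomial time; everything else is routine.
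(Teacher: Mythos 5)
Your proposal is correct and follows essentially the same route as the paper, whose proof of this theorem is literally the one-line remark that the result follows ``by applying Algorithm~\ref{Algorithm: PPoly division} piece-by-piece''; your common refinement of breakpoints, the rescaling trick via $2Q_j \geq 1$, and the three-term error split with a coefficient-sum bound on $\|f\|$ are exactly the routine bookkeeping the paper leaves implicit.
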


\begin{Corollary}\label{Corollary: PPoly = Frac}
$\PPoly \equiv \Frac$.
\end{Corollary}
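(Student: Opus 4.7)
The plan is to combine two results that are already available. The reduction $\PPoly \leq \Frac$ follows immediately by composing $\PPoly \leq \PFrac$ from Proposition \ref{Proposition: obvious reductions} with $\PFrac \equiv \Frac$ from Theorem \ref{Theorem: Frac = PFrac}. For the converse $\Frac \leq \PPoly$, I would invoke the bounded division operator of Theorem \ref{Theorem: PPoly division} to convert each successive rational approximation delivered by a $\Frac$-name into a piecewise polynomial approximation.

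In more detail, given a $\Frac$-name of a function $f \in \CI$, I would query it at precision $2^{-n-1}$ to obtain a rational function $R_{n+1} = P_{n+1}/Q_{n+1}$ with $Q_{n+1}(x) \geq 1$ on $[-1,1]$ and $|R_{n+1}(x) - f(x)| \leq 2^{-n-1}$ everywhere on $[-1,1]$. Since both $P_{n+1}$ and $Q_{n+1}$ are themselves polynomials, the constant Cauchy sequence that each generates is a trivial $\PPoly$-name of itself. Feeding these two names into the division operator of Theorem \ref{Theorem: PPoly division} yields a $\PPoly$-name of the quotient $R_{n+1}$, from which I would extract a piecewise polynomial $S_n$ with $|S_n(x) - R_{n+1}(x)| \leq 2^{-n-1}$ on $[-1,1]$. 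A triangle inequality then gives $|S_n(x) - f(x)| \leq 2^{-n}$, so the sequence $(S_n)_n$ is a $\PPoly$-name of $f$.

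The remaining task is a routine size audit: the sizes of $P_{n+1}$ and $Q_{n+1}$ are bounded by the size of the $\Frac$-name at level $n+1$, the precision parameter has bit-size $O(n)$, and Theorem \ref{Theorem: PPoly division} is polytime in the sizes of its inputs, so the overall construction runs in polynomial time. The main obstacle has already been absorbed into Theorem \ref{Theorem: PPoly division}; beyond that, the only delicate point I would expect is confirming that the division operator composes uniformly and polynomially with the trivial $\PPoly$-names coming from the numerator and denominator of a $\Frac$-name.
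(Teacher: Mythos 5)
Your proof is correct and follows essentially the same route as the paper: the reduction $\PPoly \leq \Frac$ is obtained by composing Proposition~\ref{Proposition: obvious reductions} with Theorem~\ref{Theorem: Frac = PFrac}, and the converse reduction is obtained by converting each rational approximation $P_{n+1}/Q_{n+1}$ delivered by a $\Frac$-name into a piecewise polynomial of error $2^{-n-1}$ via the division algorithm. The only cosmetic difference is that you route through the function-level operator of Theorem~\ref{Theorem: PPoly division} (wrapping $P_{n+1}$ and $Q_{n+1}$ as constant $\PPoly$-names) whereas the paper invokes Algorithm~\ref{Algorithm: PPoly division} directly; since that theorem is itself obtained by applying the algorithm piece by piece, this is the same argument.
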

\begin{proof}
Suppose we are given a fast converging sequence $(P_n(x)/Q_n(x))_n$ of rational
functions which converge to $f \colon [-1,1] \to \R$,
normalised such that $Q_n(x) \geq 1$ on $[-1,1]$.
Apply Algorithm \ref{Algorithm: PPoly division} to obtain a piecewise polynomial
approximation $g_n$ to $P_{n + 1}(x)/Q_{n + 1}(x)$ to error $2^{-n - 1}$.
Then the sequence $(g_n)_n$ is a fast converging sequence of piecewise polynomials
with limit $f$, in other words, a $\PPoly$-name of $f$.
\end{proof}

We also obtain a corollary on the complexity of integrating rationally approximable functions, which is not immediately obvious:

\begin{Corollary}
The integration functional
\[\int\colon C([-1,1]) \times \R \to \R, \; (f, x) \mapsto \int_{-1}^x f(t) dt \]
is uniformly $(\Frac \times \rho,\rho)$-polytime computable.
\end{Corollary}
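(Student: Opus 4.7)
The strategy is a three-step reduction using results already established. First, given a $\Frac$-name of $f$, apply the reduction $\Frac \leq \PPoly$ of Corollary \ref{Corollary: PPoly = Frac} to produce a $\PPoly$-name of $f$ in polynomial time. Second, establish that the parametric antidifferentiation operator $\primit\colon \CI \to \CI$, $f \mapsto \lambda x.\int_{-1}^x f(t)\,\mathrm{d}t$, is uniformly polytime $(\PPoly,\PPoly)$-computable. Third, apply evaluation, which is polytime computable with respect to $\PPoly$ because $\PPoly \leq \Fun$ (Proposition \ref{Proposition: obvious reductions}) and evaluation is polytime with respect to $\Fun$ by the universal property (Proposition \ref{Proposition: universal property of Fun}). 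Composing these three polytime operations yields the claim.

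The only non-routine step is the second one. Given a fast converging Cauchy sequence $(g_k)_k$ of dyadic piecewise polynomials with $\norm{g_k - f} \leq 2^{-k}$, the plan is to produce the $n$-th term of a $\PPoly$-name of $F = \primit(f)$ as follows. Start from $g_{n+2}$, antidifferentiate each polynomial piece of degree $d$ termwise (turning $a_i x^i$ into $\tfrac{a_i}{i+1}x^{i+1}$), and then fix the constants of integration piece by piece by imposing continuity across breakpoints together with the boundary value $F(-1) = 0$. The resulting piecewise polynomial $\tilde{G}_n$ has rational (generally non-dyadic) coefficients and satisfies $\norm{\tilde{G}_n - F} \leq \int_{-1}^1 |g_{n+2}(t) - f(t)|\,\mathrm{d}t \leq 2^{-n-1}$. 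A final rounding step replaces every rational coefficient by a nearby dyadic one at precision $2^{-n-1}$ divided by (number of pieces)$\times$(degree $+\,1$), so that the uniform perturbation on $[-1,1]$ is at most $2^{-n-1}$; this produces a piecewise dyadic polynomial $G_n$ with $\norm{G_n - F} \leq 2^{-n}$, as required.

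The main obstacle, and the reason this step deserves attention, is exactly this need to round: termwise antidifferentiation does not preserve the dyadic coefficient structure that $\PPoly$ demands. The rest is complexity bookkeeping: the number of pieces is unchanged, each piece's degree grows by one, the rational coefficients $a_i/(i+1)$ and the cumulative continuity constants (sums of polynomial evaluations at dyadic breakpoints) have bitsize polynomially bounded in the size of $g_{n+2}$, and the dyadic rounding precision is polynomial in $n$ and the input size. Once the $\PPoly$-name of $F$ has been produced, step three evaluates $F$ at the input $\rho$-name of $x$ in polynomial time, completing the proof.
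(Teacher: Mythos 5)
Your proposal is correct and follows essentially the same route the paper intends (the corollary is left without an explicit proof there): compose the reduction $\Frac\leq\PPoly$ of Corollary~\ref{Corollary: PPoly = Frac} with the easy polytime antidifferentiation of piecewise polynomials and polytime evaluation via $\PPoly\leq\Fun$. One small wrinkle: rounding every coefficient of $\tilde{G}_n$ independently can destroy the continuity at breakpoints that the paper's definition of a piecewise polynomial requires; this is repaired immediately by rounding only the non-constant coefficients and then choosing the dyadic constants of integration piece by piece so as to enforce exact continuity at the dyadic breakpoints (possible because a polynomial with dyadic coefficients takes dyadic values at dyadic points), which changes the error bookkeeping only trivially.
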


\section{Compositional Evaluation Strategies}\label{section: strategies}

In this section we introduce the notion of compositional evaluation strategy over an algebraic structure $\Sigma$. 
This will allow us to state our main result on the existence of a modular polytime algorithm for evaluating all sufficiently simple symbolic expressions which involve maximisation or integration.

For a class of spaces $C$, let $\Prod(C)$ denote the class of all finite and countable products of members of $C$, \ie a space $A$ belongs to $\Prod(C)$ if and only if it is of the form $A_1 \times \dots \times A_n$ or $\prod_{i \in \N} A_i$ with $A_i$ being members of $C$.

Consider structures of the form 
\[\Sigma = \left(\Fix, \Free, \Op, \Const \right)\] 
where

\begin{enumerate}
\item 
	$\Fix$ is a set of represented spaces $(Y, \delta_Y)$,
	containing at least the space $\left(\N, \delta_{\N}\right)$
	of natural numbers with the standard representation induced
	by the binary notation.
\item $\Free$ is a set of represented spaces.
\item $\operatorname{Op}$ is a set of partial multi-valued operations of the form
$f \colon \subseteq A \rightrightarrows B$
where $A, B \in \Prod(\Fix \cup \Free)$.
% \item $\operatorname{Const}$ is a set whose members are elements of a space in $\Prod(\Fix \cup \Free)$.
\item $\operatorname{Const}$ is a subset of the disjoint union of all spaces in $\Prod(\Fix \cup \Free)$.
\end{enumerate}

The set $\Fix$ is called the set of \emph{fixed spaces}, the set $\Free$ is called the set of \emph{free spaces}, the set $\Op$ is called the set of \emph{operations} and the set $\Const$ is called the set of \emph{constants}. An operation of the type 
$A_1 \times \dots \times A_n \rightrightarrows B_1 \times \dots \times B_m$
will be called an $(n,m)$-ary operation. An $(n,1)$-ary operation will also be called an $n$-ary operation for short.

A constant $c \in X$ where $X \in \Prod(\Fix \cup \Free)$ will be called a constant of type $X$ and we write $c \colon X$. For every $X \in \Prod(\Fix \cup \Free)$ we introduce a countable set of free variables $x_n \colon X$ of type $X$. A term over the signature of $\Sigma$ is defined inductively as follows:
\begin{enumerate}
\item Every free variable of type $X$ is a term of type $X$.
\item Every constant of type $X$ is a term of type $X$.
\item If $t_1 : X_1$ and $t_2 : X_2$ are terms, then $(t_1, t_2)$ is a term of type $X_1 \times X_2$.
\item If $t \colon X$ is a term of type $X$ with a free variable $n$ of type $\N$ then $\lambda n. X$ is a term of type $X^{\N}$.
\item If $t \colon X$ is a term and $f \colon \subseteq X \rightrightarrows Y$ is an operation, then $f(t)$ is a term of type $Y$.
\end{enumerate}
A term is called \emph{closed} if it contains no free variables. 
We denote the set of closed terms of $\Sigma$ by $\CT(\Sigma)$. 
If $t \colon X$ is a closed term we denote by 
$\sem{t}_\Sigma$ 
the set of elements of $X$ which it represents under the obvious semantics%
\footnote{
	The application of a \emph{partial} operation could lead to the semantics of a term to be undefined. 
	It is however straightforward to define (inductively) what it means for a term to be well-defined, 
	and we will henceforth assume that all terms are well-defined.}. 
A term $t : Y$ is called 
\emph{semi-closed} 
if it contains no free variables of free space type.
We denote the set of semi-closed terms of $\Sigma$ by $\SCT(\Sigma)$. 
If $x_1 : X_1, \dots, x_n : X_n$ are the free variables in $t$, 
then on the semantic side $t$ defines a partial operation
\[ \sem{t}_\Sigma \colon \subseteq X_1 \times \dots \times X_n \rightrightarrows Y.\]

Suppose we are given a structure $\Sigma$. 
A \emph{compositional evaluation strategy} for $\Sigma$ consists of:

\begin{enumerate}
\item For every free space $X$ of $\Sigma$ a representation $\delta_{X} \colon \subseteq \M \to X$.
\item 
    For each operation $f \colon \subseteq X \rightrightarrows Y$
    of $\Sigma$
	an algorithm which computes a 
	$(\delta_X, \delta_Y)$-realiser of $f$.
\item 
    For each constant $x \colon X$ of $\Sigma$
    an algorithm which computes a $\delta_X$-name of $x$.
\end{enumerate}
A compositional evaluation strategy $S$ defines a map
\[\eval_S \colon \subseteq \CT(\Sigma) \to \M \]
which sends a closed term $t : X$ of type $X$ to a point $\eval_S(t) \in \M$ with $\delta_X(\eval_S(t)) \in \sem{t}_{\Sigma}$. 
We define the \emph{running time of $S$ on $t$}
\[
	T_S(t, \cdot) \colon \N \to \N
\]
as the time it takes to compute $\eval_S(t)(\cdot)$ using the compositional evaluation strategy. 
The map $\eval_S$ extends to a map
\[
	\eval_S \colon \subseteq \SCT(\Sigma) \to \M^\M 
\]
which sends a semi-closed term $t : Y$ to a realiser of the operation $\sem{t}_{\Sigma}$.
The \emph{running time of $S$ on $t \in \SCT(\Sigma)$} - if it exists - is then the smallest second-order function
\[
	T_S(t, \cdot, \cdot) \colon \N^\N \times \N \to \N, 
\]
such that $T_S(t, |\varphi|, |u|)$ is a bound on the time it takes to compute $\eval_S(t)(\varphi, u)$ using $S$.
We say that a strategy $S$ is \emph{polytime} if it evaluates every semi-closed term of $\Sigma$ of \emph{fixed space type} in polynomial time.

It should be noted that a strategy being polytime does not imply that the running time of the strategy 
grows polynomially in the size of the term it is evaluating.
For example, consider the structure $\Sigma = (\{\R\}, \emptyset, \{\texttt{square}\}, \{\Q\})$,
where $\texttt{square}(x) = x^2$ is the squaring operation.
This structure
can be evaluated in polynomial time.
However, when evaluating the term 
\[ 
	\texttt{square}^{(n)}(2) = \underbrace{\texttt{square} \circ \dots \circ \texttt{square}}_{n \text{ times}} (2)
\]
to an accuracy of $1$ bit, the running time of any compositional evaluation strategy for this structure 
grows super-exponentially in $n$.

\section{On the complexity of integration and maximisation for common functions}

Consider the structure 
\[
	\Sigma = (\{\R\}, \{\CI\}, \operatorname{Op}, \operatorname{Const})
\]
where $\operatorname{Const}$ is the disjoint union of all 
polytime computable real numbers 
and all 
polytime computable functions 
in Gevrey's hierarchy 
and $\operatorname{Op}$ consists of the following operations:
\begin{enumerate}
\item $\operatorname{const} \colon \R \to \CI, \; x \mapsto \lambda t. x$.
\item $+\colon \CI \times \CI \to \CI, \; (f,g) \mapsto f + g.$
\item $\times\colon \CI \times \CI \to \CI, \; (f,g) \mapsto f \cdot g.$ 
\item $- \colon \CI \to \CI, \; f \mapsto -f$.
\item $\div \colon \subseteq \CI \times \CI \to \CI, \; (f, g) \mapsto f/g$, where 
\[\dom(\div) = \Set{(f,g) \in \CI \times \CI}{g(x) \geq 1 \text{ for all }x \in [-1,1]}.\]
\item $\sqrt{|\cdot|} \colon \CI \to \CI, \; f \mapsto \sqrt{|f|}.$
\item $\circ \colon \subseteq \CI \times \CI \to \CI, \; (f,g) \mapsto f \circ g,$ where 
\[\dom(\circ) = \Set{(f,g) \in \CI \times \CI}{g([-1,1]) \subseteq [-1,1]}. \]
\item $\max \colon \CI \times \CI \to \CI, \; (f,g) \mapsto \max(f,g).$
\item $\paramax \colon \CI \to \CI, \; f \mapsto \lambda t.\max\Set{f(s)}{s \in [-1,t]}.$
\item 
\begin{align*}
	&\operatorname{join} \colon \subseteq [-1,1] \times \CI \times \CI \to \CI,
	\\
	&(a, f, g) 
	\mapsto 
	\lambda x. 
		\begin{cases} f(x) &\text{if }x \leq a, \\
					  g(x) &\text{if }x \geq a,
		\end{cases}
\end{align*}
where 
$\dom(\operatorname{join}) = \Set{(a,f,g)}{f(a) = g(a)}$.
\item $\primit \colon \CI \to \CI, \; f \mapsto \lambda t.\int_{-1}^t f(s) \operatorname{ds}.$
\item $\eval \colon \CI \times [-1,1] \to \R, \; (f, x) \mapsto f(x).$
\end{enumerate}

Note in particular that $\Sigma$ allows us to express the integral 
$\int_{a}^b f(x) \operatorname{dx}$
as 
\[
\eval(\primit(f), b) - \eval(\primit(f), a)
\]
and the maximum 
$\max_{x \in [a,b]} f(x)$
as 
\[
 \eval(\paramax(\operatorname{join}(a, \const(\eval(f,a)), f)), b).
\]
The structure $\Sigma$ arguably contains most univariate functions on a compact interval that are used in practical computing,
as it contains the polytime analytic functions and all commonly available closure operations.

\begin{Theorem}\label{Theorem: evaluating all common functions in polynomial time}
	There exists a compositional evaluation strategy for $\Sigma$,
	using $\PPoly$ to represent the space $\CI$, 
	that runs in polynomial time.
\end{Theorem}
\begin{proof}
Let $f$ be a polytime computable function in Gevrey's hierarchy.
Then $f$ has a polytime computable $\Fun$-name by Proposition \ref{Proposition: universal property of Fun}.
It follows from Theorem \ref{Theorem: Poly = Fun on Gev(M,R,a)}
that $f$ has a polytime computable $\PPoly$-name.

It remains to show that the operations listed above are polytime computable with respect to $\PPoly$.
Polytime computability of the first four operations is obvious.
Polytime computability of $\div$ is proved in Theorem \ref{Theorem: PPoly division}.
Polytime computability of composition is easily established for $\Frac$, which is polytime equivalent to $\PPoly$ by Corollary \ref{Corollary: PPoly = Frac}.
The polytime computability of $\sqrt{|\cdot|}$, 
follows from Newman's Theorem \cite{Newman} on the rational approximability of the square root (see \cite{LabhallaEtAl} for details) 
in conjunction with the polytime computability of division and the polytime computability of composition.
The polytime computability of $\max$, $\paramax$, and $\operatorname{join}$ is established in 
Corollary \ref{Corollary: PPoly maximisation}.
The polytime computability of $\primit$ is elementary.
The polytime computability of $\eval$ is established in 
Proposition \ref{Proposition: obvious reductions}.
\end{proof}

Theorem \ref{Theorem: evaluating all common functions in polynomial time} can be taken as evidence that there are no ``natural'' functions whose integral and maximum are difficult to compute.

\begin{Theorem}\label{Theorem: no polytime evaluation strategy based on Fun or Poly}
	There is no evaluation strategy which uses 
	the representations $\Poly$, $\PAff$, or $\Fun$
	which evaluates $\Sigma$ in polynomial time.
\end{Theorem}
\begin{proof}
	Consider the problem of computing 
	$\int_{-1}^1 |x| \operatorname{dx}$
	which can be expressed by the term 
	$
		\eval(\primit(\max(-x,x)), 1)
	$
	of $\Sigma$.
	
	Any correct algorithm that sends a $\Fun$ name of a function $f$ to a Cauchy name of the real number $\int_{-1}^1 f(x) \operatorname{dx}$ has to query its input function at least $2^{\omega_f(n)}$ times,
	where $\omega_f$ is the modulus of continuity provided by the $\Fun$ name of $f$,
	to produce an approximation to error $2^{-n}$.
	A fortiori any compositional evaluation strategy using $\Fun$ 
	requires running time at least $2^{n}$ when evaluating the term
	$\eval(\primit(\max(-x,x)), 1)$ to error $2^{-n}$.
	This shows that no compositional evaluation strategy using $\Fun$ evaluates $\Sigma$ in polynomial time.

	Any correct algorithm that sends a $\Poly$ name of a function $f$ to a Cauchy name of the real number $\int_{-1}^1 f(x) \operatorname{dx}$ has to query its input for a polynomial approximation to $f$ to error at least $2^{-n}$ in order to compute an approximation to the output to error $2^{-n}$.
	But it was shown in the proof of Proposition \ref{Proposition: Poly and PAff} that the size of any sequence of polynomial approximations to $|x|$ grows exponentially in the accuracy of the approximation.
	This shows that no compositional evaluation strategy using $\Poly$ evaluates $\Sigma$ in polynomial time. 

	To show the analogous claim for the representation $\PAff$ consider the term 
	$
		\eval(\primit(x^2), 1)
	$
	which represents the number $\int_{-1}^1 x^2 \operatorname{dx}$
	and use that, by the proof of Proposition \ref{Proposition: Poly and PAff}, any $\PAff$ name of $x^2$ grows exponentially. 
\end{proof}

Compare Theorems \ref{Theorem: evaluating all common functions in polynomial time} and \ref{Theorem: no polytime evaluation strategy based on Fun or Poly} with Theorem \ref{Theorem: reducibility between representations}.
By Theorem \ref{Theorem: reducibility between representations} there is a strict linear chain of polytime reductions 
\[
	\Poly < \PPoly < \Fun.
\]
Intuitively this says that among the three representations $\Poly$ contains the greatest amount of information about a function while $\Fun$ contains the least, with $\PPoly$ being somewhere in the middle.
By Theorem \ref{Theorem: no polytime evaluation strategy based on Fun or Poly} and its proof, the representation $\Poly$ contains too much information to evaluate all terms of the structure $\Sigma$ in polynomial time, as it does not render sufficiently many points of $\CI$ polytime computable.
By contrast, the representation $\Fun$ contains too little information to evaluate all terms of $\Sigma$ in polynomial time, as it does not render sufficiently many functionals on $\CI$ polytime computable.

By Theorem \ref{Theorem: evaluating all common functions in polynomial time} and its proof, the representation $\PPoly$ does evaluate all terms of $\Sigma$ in polynomial time, which can be intuitively interpreted as saying that $\PPoly$ contains just the right amount of information to evaluate $\Sigma$ efficiently.   

\section{Experiments}
\label{Section: Experiments}

We describe a set of experiments we conducted to gauge the practical
efficiency of the representations $\Fun$, $\Poly$, $\PPoly$, $\Frac$
as well as some more efficient variants:
\begin{itemize}
  \item $\BFun$ represents a function $f{:{}}[-1,1]\to\R$ by $F{:{}}\ID[-1,1] \to \ID$
  , where $\ID$ is the discrete space of intervals with dyadic rational endpoints,
  such that \(f(x) = \bigcap\{F(X)\, |\, x\in X\in \ID[-1,1]\}\) for each $x\in[-1,1]$.
  \item $\DBFun$ represents a continuously differentiable function $f$ by a pair
  $F,F'$ where $F$ is a $\BFun$ name of $f$ and $F'$ is a $\BFun$ name of $f'$.
  \item ``Local'' representation $\LPoly$ that represents $f$ by
  a dependent-type function $F$ that maps each $D\in\ID$ to a Poly-name of $f|_D$.
  Representations $\LPPoly$ and $\LFrac$ are defined analogously. 
\end{itemize} 

The representation $\BFun$ is the standard representation of continuous functions
in interval analysis.
Our benchmarks confirm that it is much more efficient than $\Fun$ from a practical perspective.
The main reason why we use $\Fun$ instead of $\BFun$ in our theoretical considerations
is that $\BFun$ is not a well-behaved representation from the point of view of 
second-order complexity, as the size function of a name does not provide
sufficient information on the ``complexity'' of that name.
In fact, it is easy to show that every computable function has a 
polytime computable $\BFun$-name.
On the other hand, the use of $\Fun$ is justified by Proposition \ref{Proposition: universal property of Fun}.
We consider DBFun, although it is not a representation of continuous functions,
because it alleviates one of the disadvantages that Fun and BFun have compared to
polynomial-based representations, namely the in-ability to utilise the potential smoothness of $f$.
The ``local'' representations are polytime equivalent to their ``global'' counterparts, 
so that we did not have to consider them in the theoretical part of this paper.
However, it is obvious that they offer a great practical advantage,
as it would be wasteful to compute an approximation 
over the whole interval $[-1,1]$ when only a local approximation
over a small interval is needed.

For each representation, we implemented a calculator for the following task:

\begin{description}
\item{\textbf{Input}:} A real function $x\mapsto f(x)$ given as 
a symbolic expression over a signature with the functions \(x\mapsto 1\), \( x \mapsto x\)
and pointwise sine, cosine, maximum, and field operations

\item{\textbf{Output}:} $\max_{x\in[-1,1]}f(x)$ or $\int_{-1}^1 f(x)\, \mathrm{d}x$ 
encoded as a fast converging Cauchy sequence
\end{description}

Note that the input and output are independent of the chosen function representation.
Thus all the calculators have the same ``user interface''.
  
The input expressions are evaluated bottom-up using 
an evaluation strategy based on the chosen representation.  
E.g., on input $\sin(\sin(x))$ the $\Poly$-calculator 
constructs a polynomial approximation of $\sin(x)$ and feeds this approximation
again to the same implementation of sine that produces a polynomial 
approximation of $\sin(\sin(x))$. The calculators do not attempt to simplify, differentiate 
or otherwise symbolically manipulate the given expression.

In other words, we implement compositional evaluation strategies for the structure
\[\Sigma = \left(\left\{(\R,\rho)\right\}, \left\{\CI\right\}, \left\{\range, \int, +, \times, -, \div, \sin, \cos, \max \right\}, \{1,x\} \right). \]
based on the different representations.
Theorems \ref{Theorem: evaluating all common functions in polynomial time}
and \ref{Theorem: no polytime evaluation strategy based on Fun or Poly}  suggest that representations based on $\PPoly$ 
will perform best in our benchmarks.
In particular, they should perform better than representations based on $\Fun$ for almost any function.
They should also perform better than representations based on $\Poly$ for non-smooth functions.
Our experimental findings confirm this for the majority of functions we have considered.

\subsection{Implementation}

Due to space constraints we describe only the most significant aspects of our
implementation.
We describe it in more detail in the technical paper \cite{fnreps-technical}.
The source code\footnote{\url{http://tinyurl.com/aern2-fnreps}} is available online.
It should be emphasized that our implementation is not designed to outperform practical algorithms for integration and range computation,
but to provide a common framework to offer a fair comparison of different algorithmic approaches.
Our implementation framework is not optimised for speed, and, with the exception of $\DBFun$, we do not exploit any information about the derivatives of our functions,
which in practice makes an enormous difference.

\paragraph{Fun representations.}

Most operations over $\Fun$, $\BFun$ and $\DBFun$ are implemented in a straightforward manner
ball/point-wise.  Range maximum and integration are implemented using bisection.
The target accuracy of integration is raised by 1 bit with each domain bisection.
Integration bisection ends when the area of the ``box'' enclosing the function over the 
segment is below the target accuracy.
The maximisation algorithm employs a simple branch and bound method
to prune away intervals where the maximum is not attained.
The derivative available in DBFun is used solely to improve the interval extension
of $f$ using the formula $f([c\pm \varepsilon])\subseteq f(c)\pm \varepsilon\cdot f'([c\pm \varepsilon])$. 

\paragraph{Polynomial representations.}

Polynomials are represented primarily sparsely in the Chebyshev basis
over $[-1,1]$ with dyadic coefficients.  Any terms that are smaller than the
current accuracy target are sweeped away, \ie removed and their size added to the error radius.
The choice of the Chebyshev basis is motivated by the fact that this sweeping procedure works 
well in the Chebyshev basis, but not in the monomial basis.
While our theoretical results are formulated with respect to the monomial basis,
it is straightforward to verify that the translations between the Chebyshev basis and the monomial basis are computable in polynomial
time.
The range maximisation algorithm combines the root counting techniques described in
Chapter 10 of \cite{BasuPollackRoy}
with a branch and bound method similar to the one employed in the 
maximisation algorithm for $\BFun$.
It temporarily translates the polynomials to a dense representation 
in the monomial basis with integer coefficients.

Poly division, pointwise maximisation, and for very large polynomials also multiplication, 
is computed using an interval version of Chebyshev interpolation for analytic functions
via the encoding of discrete cosine transform (DCT) from \cite{BT97}.

$\PPoly$ division is described in Section \ref{Section: division}.
$\PPoly$, $\Frac$, and local representations use essentially the same algorithm as $\Poly$
for range maximisation.
Frac integration is computed via a translation to PPoly.

The local representations delegate integration to their non-local
counterparts over the equidistant partition of the domain into
$n$ segments where $n$ is the target accuracy
\footnote{\ie the required error bound is $2^{-n}$.}. 

\subsection{Benchmarks and results}

\paragraph{Well-behaved analytic functions.}

First, consider the functions in Fig.~\ref{fig:exprs-easy} 
that are analytic on the whole complex plane.
As the charts are linear-logarithmic, exponential maps show
as straight lines and a polynomial maps show as logarithmic
curves.

We have not included timings for representations $\PPoly$, $\Frac$, $\LPPoly$ and $\LFrac$
in Fig.~\ref{fig:exprs-easy}
because for these expressions our implementations of $\PPoly$ and $\Frac$
compute identical approximations as our implementation of $\Poly$.

\begin{figure}[tp]
\begin{center}
\begin{tabular}{cc}
\toprule
\multicolumn{2}{c}{Execution time (s) vs Accuracy (bits)}
\\[1ex]
Range maximum over $[-1,1]$ & Integral over $[-1,1]$
\\
\midrule
\multicolumn{2}{l}{%
$f(x)=\sin(10x)+\cos(20x)$
\includegraphics[width=.3\hsize, height=4ex]{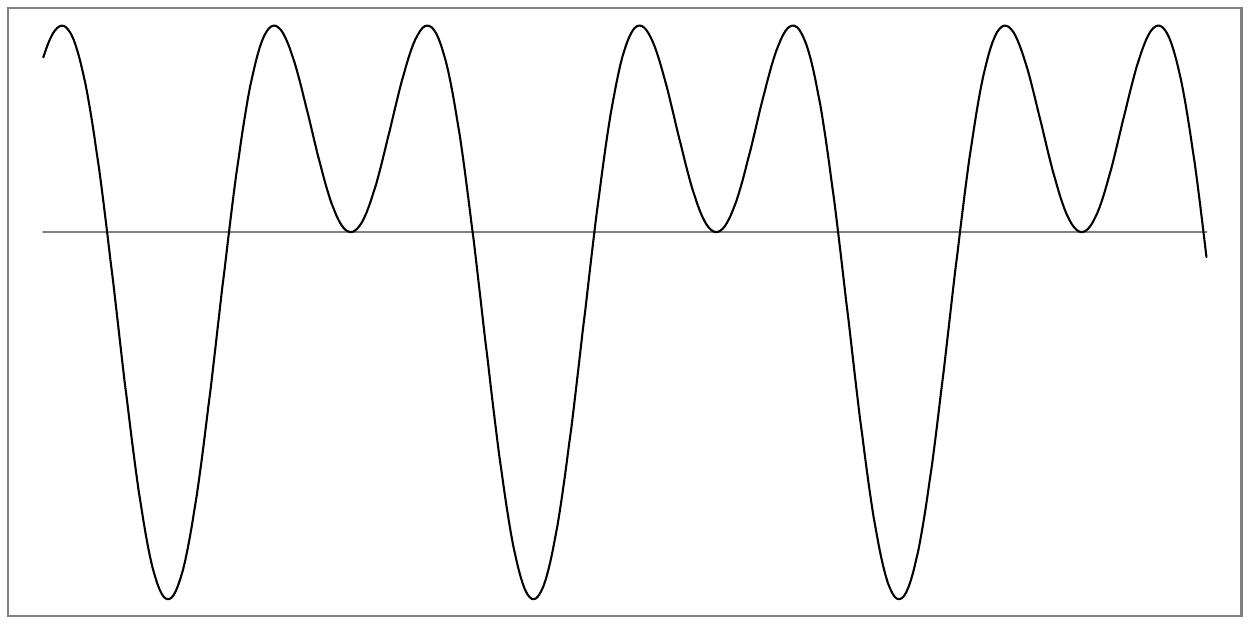}}
\\
\includegraphics[width=.46\hsize]{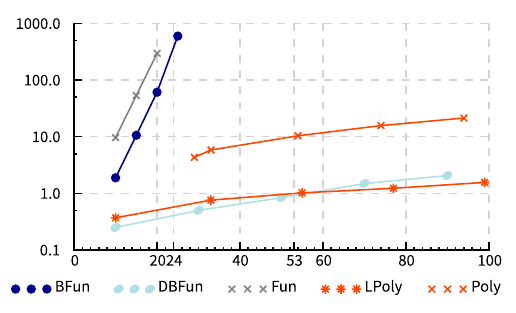}
&
\includegraphics[width=.46\hsize]{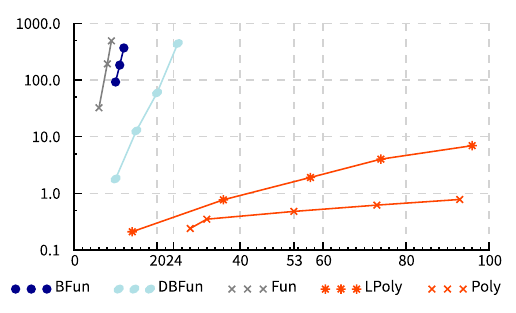}
\\
\midrule
\multicolumn{2}{l}{%
$f(x)=\sin(10x)+\cos(7\pi x)$
\includegraphics[width=.3\hsize, height=4ex]{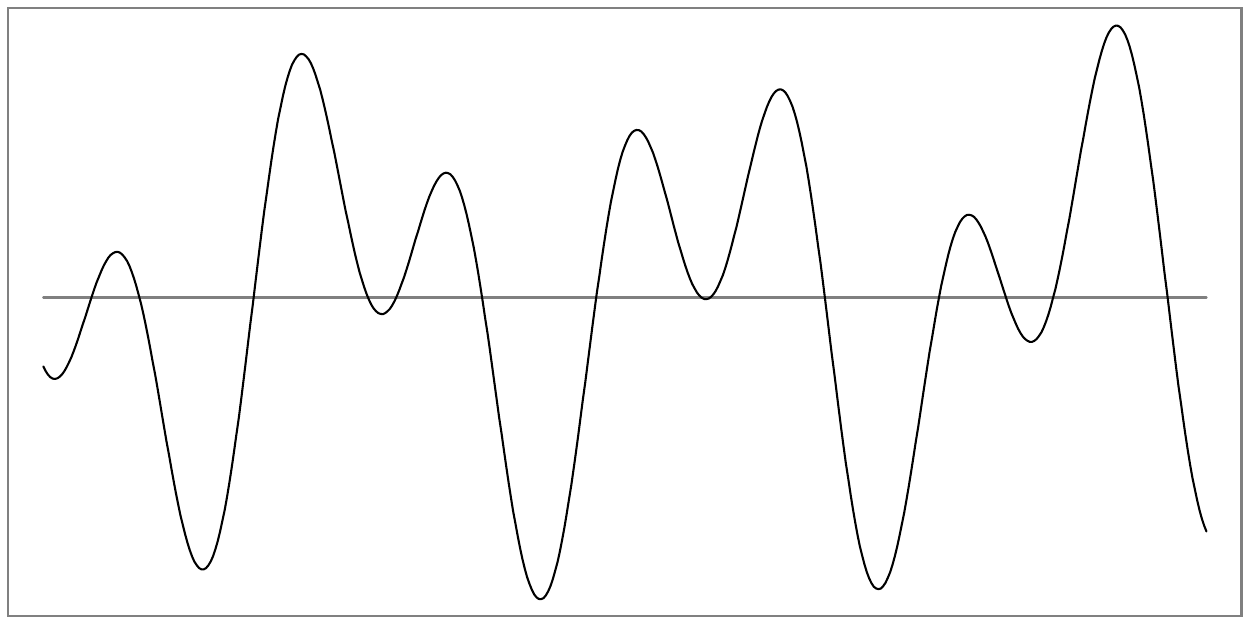}}
\\
\includegraphics[width=.46\hsize]{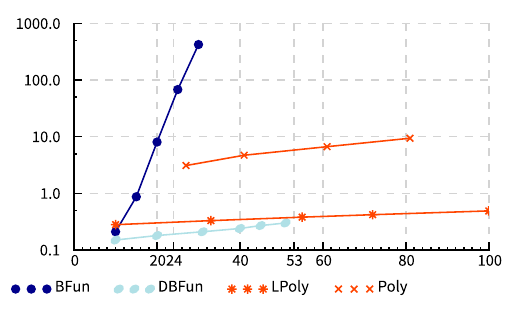}
&
\includegraphics[width=.46\hsize]{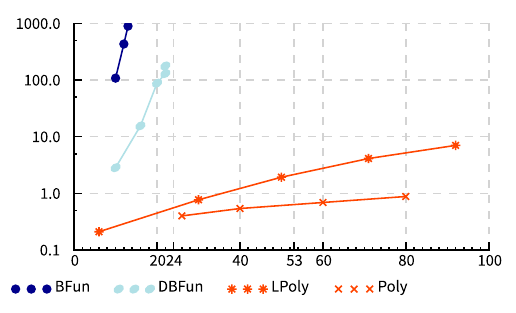}
\\
\midrule
\multicolumn{2}{l}{%
$f(x)=\sin(10x+\sin(7\pi x^2))+\cos(10x)$
\includegraphics[width=.3\hsize, height=4ex]{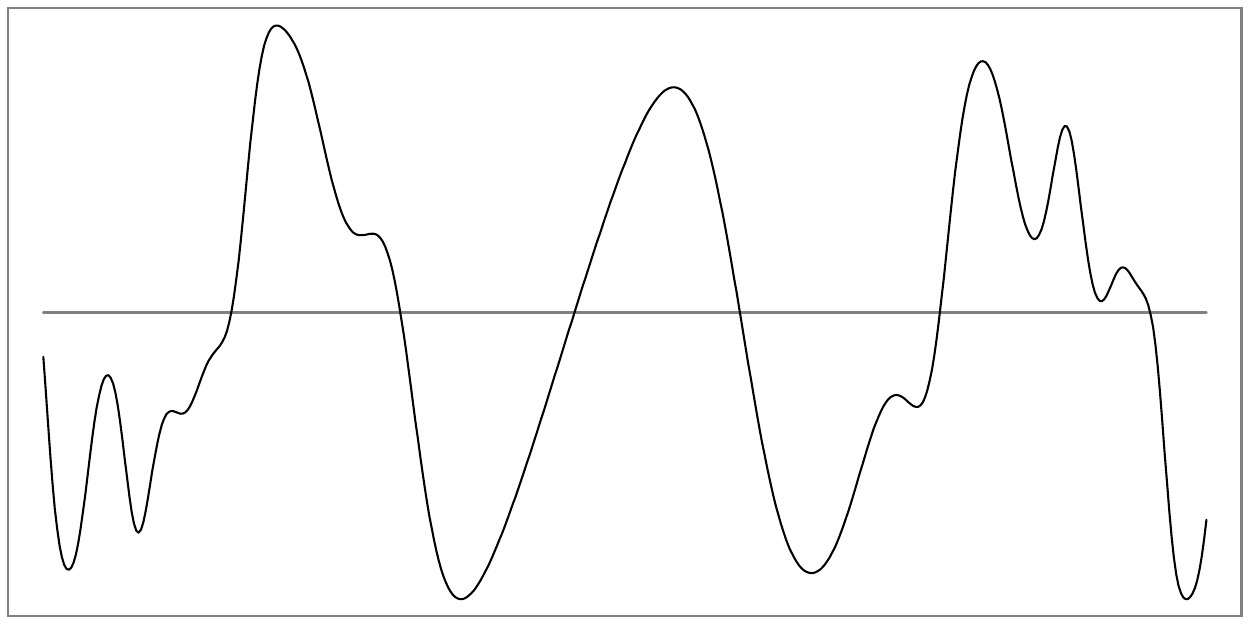}}
\\
\includegraphics[width=.46\hsize]{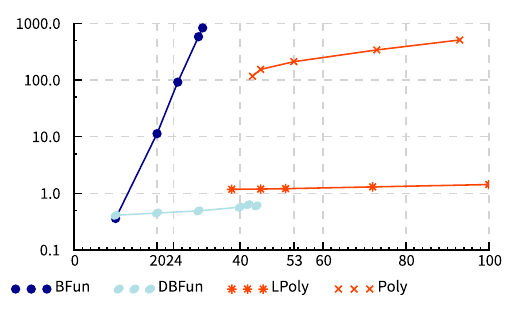}
&
\includegraphics[width=.46\hsize]{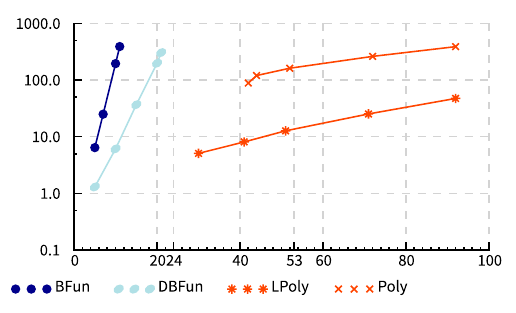}
\\
\bottomrule
\end{tabular}

  \caption{Measurements for analytic functions without nearby singularities}
  \label{fig:exprs-easy}
\end{center}
\end{figure}

$\Fun$ performed so poorly that we struggled to get any points within
the constraints of our charts.  Therefore we applied it on the first and simplest
function only.

$\DBFun$ has computed the range of \(\sin(10x)+\cos(20x)\) much more
efficiently than the range of \(\sin(10x)+\cos(7\pi x)\).
This indicates that $\DBFun$ maximisation is very sensitive
to the quality of the interval extension of $f$.
We expect that $\BFun$ is also sometimes similarly sensitive although we have not
observed it in our benchmarks.

These examples confirm our prediction that range and integral for these kinds of functions
are much more efficient to compute via polynomial approximations than
simply via $\Fun$ representations.  
Moreover, localisation seems to help when functions are defined by
a nested application of elementary functions.

\paragraph{Functions with division and pointwise maximum.}

\begin{figure}[tp]
\begin{center}
\begin{tabular}{cc}
\toprule
\multicolumn{2}{c}{Execution time (s) vs Accuracy (bits)}
\\[1ex]
Range maximum over $[-1,1]$ & Integral over $[-1,1]$
\\
\midrule
\multicolumn{2}{l}{%
$\displaystyle f(x)=\frac{\sin(10x)+\cos(7\pi x)}{100x^2+1}$
\includegraphics[width=.3\hsize, height=5ex]{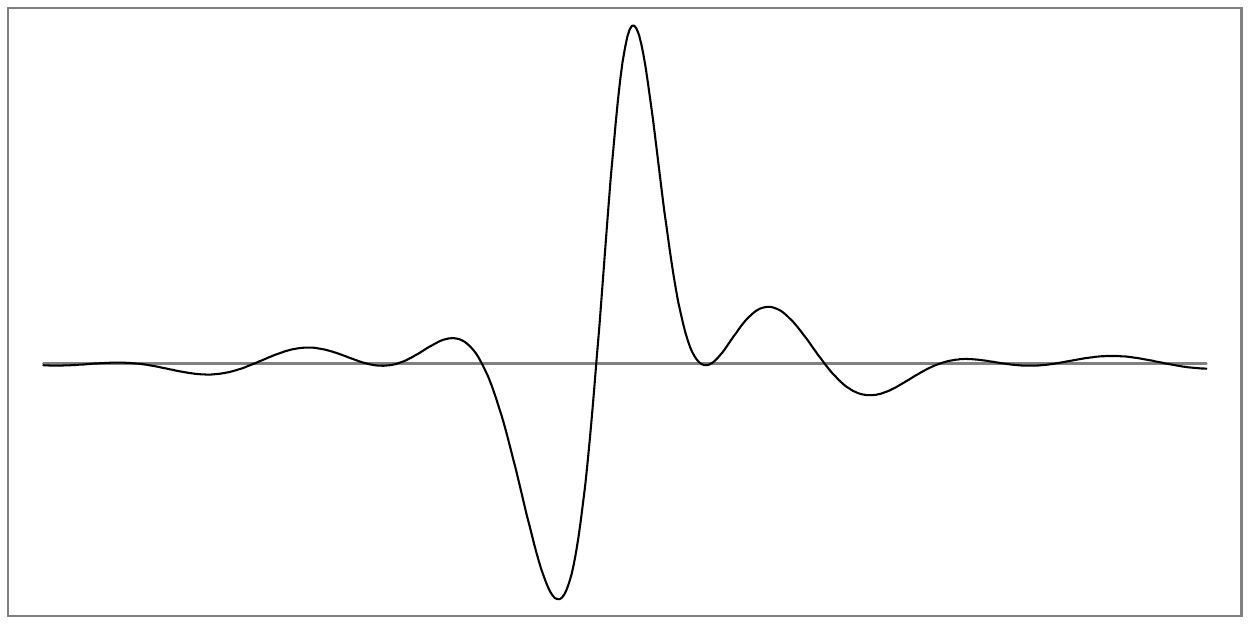}}
\\
\includegraphics[width=.46\hsize]{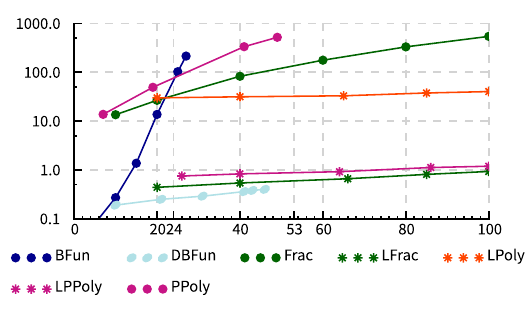}
&
\includegraphics[width=.46\hsize]{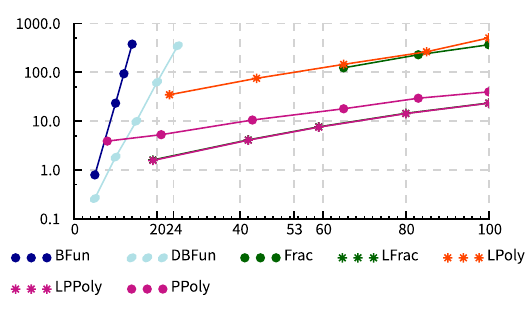}
\\
\midrule
\multicolumn{2}{l}{%
$\displaystyle f(x)=\frac{\sin(10x)+\cos(7\pi x)}{10(\sin(7x))^2+1}$
\includegraphics[width=.3\hsize, height=5ex]{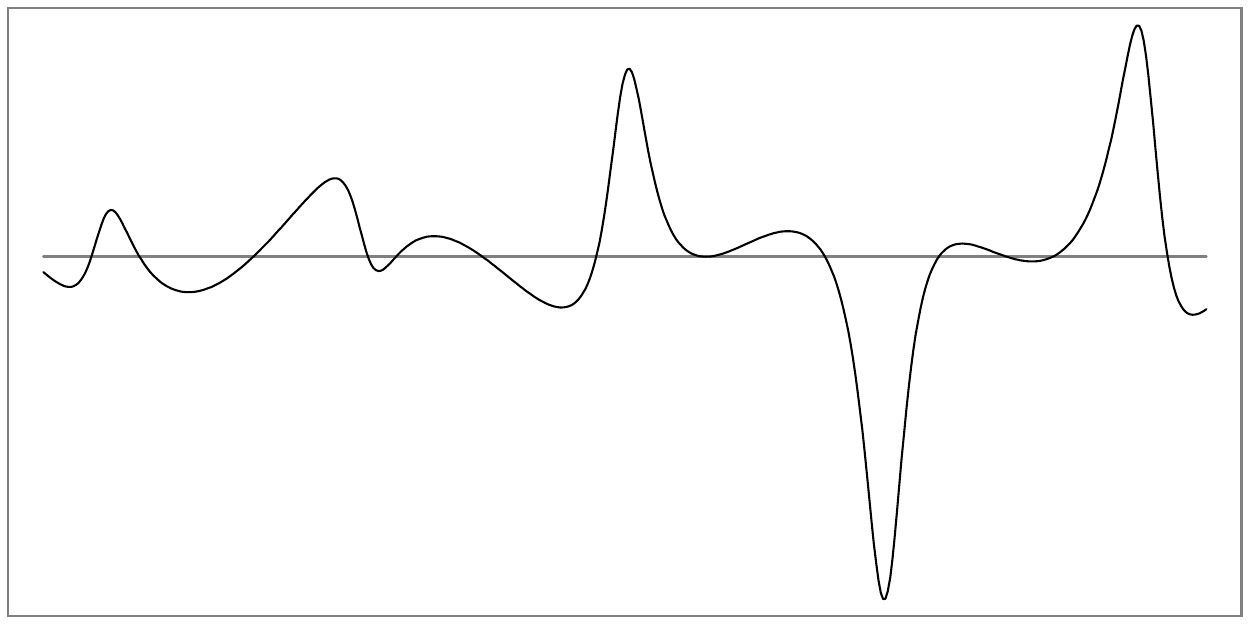}}
\\
\includegraphics[width=.46\hsize]{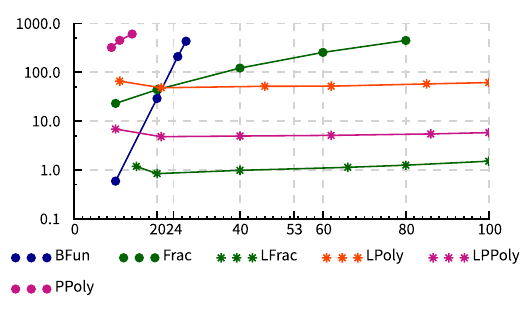}
&
\includegraphics[width=.46\hsize]{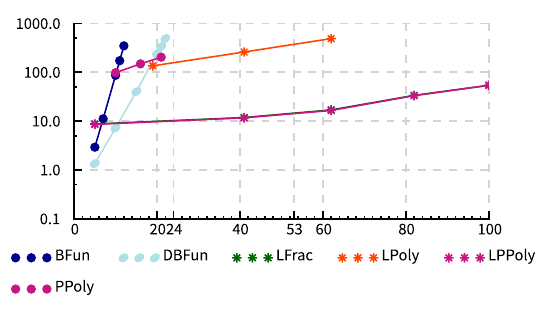}
\\
\midrule
% \bottomrule
% \end{tabular}
% 
%   \caption{Measurements for analytic functions with nearby singularities}
%   \label{fig:exprs-easy}
% \end{center}
% \end{figure}
% 
% \begin{figure}[tp]
% \begin{center}
% \begin{tabular}{cc}
% \toprule
\multicolumn{2}{l}{%
$\displaystyle f(x)=\max\left(\sin(10x), \cos(11x) \right)$
\includegraphics[width=.3\hsize, height=5ex]{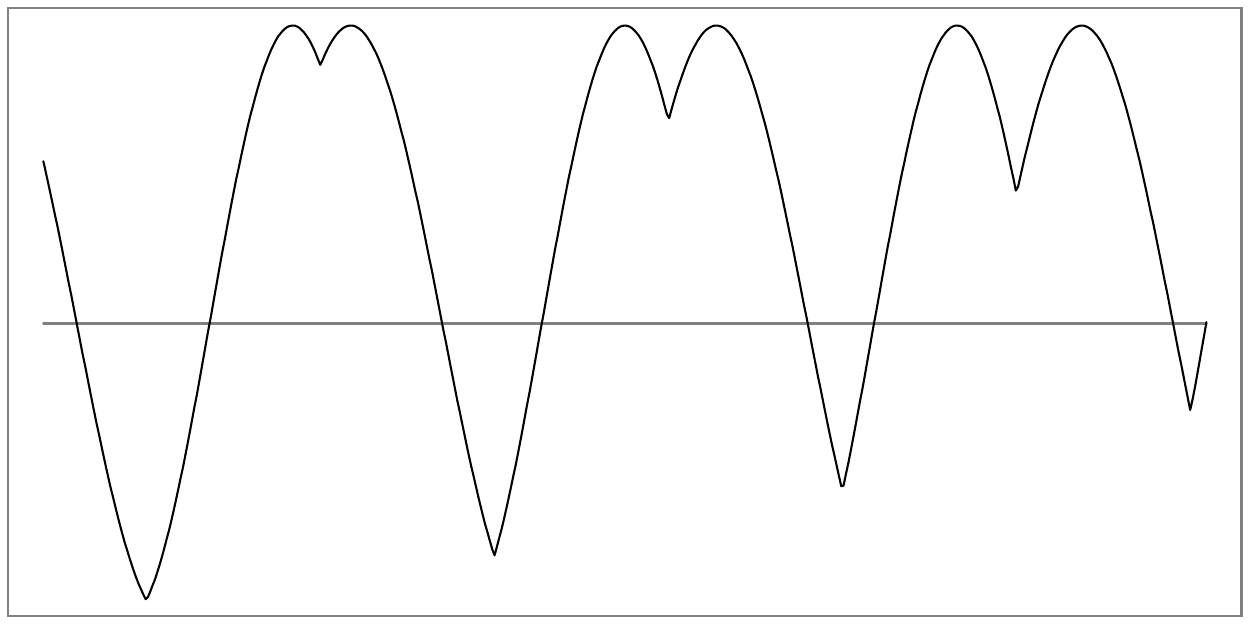}}
\\
\includegraphics[width=.46\hsize]{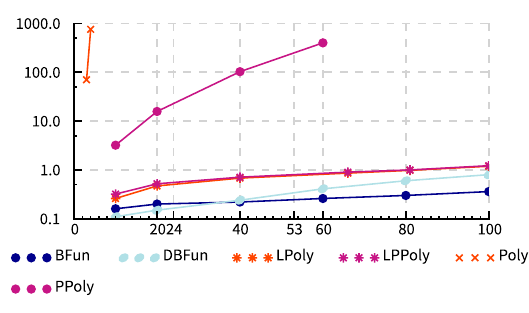}
&
\includegraphics[width=.46\hsize]{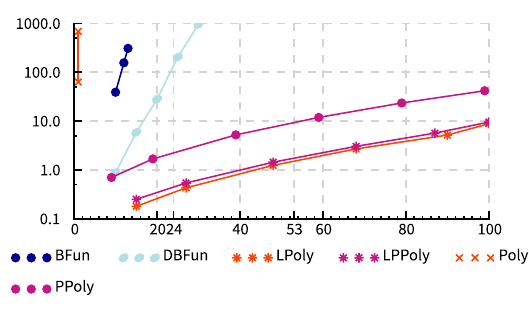}
\\
\midrule
\multicolumn{2}{l}{%
$\displaystyle f(x)=\max\left(\frac{x^2}{2},\frac{\sin(10x)+\cos(7\pi x)}{10(\sin(7x))^2+1}\right)$
\includegraphics[width=.3\hsize, height=5ex]{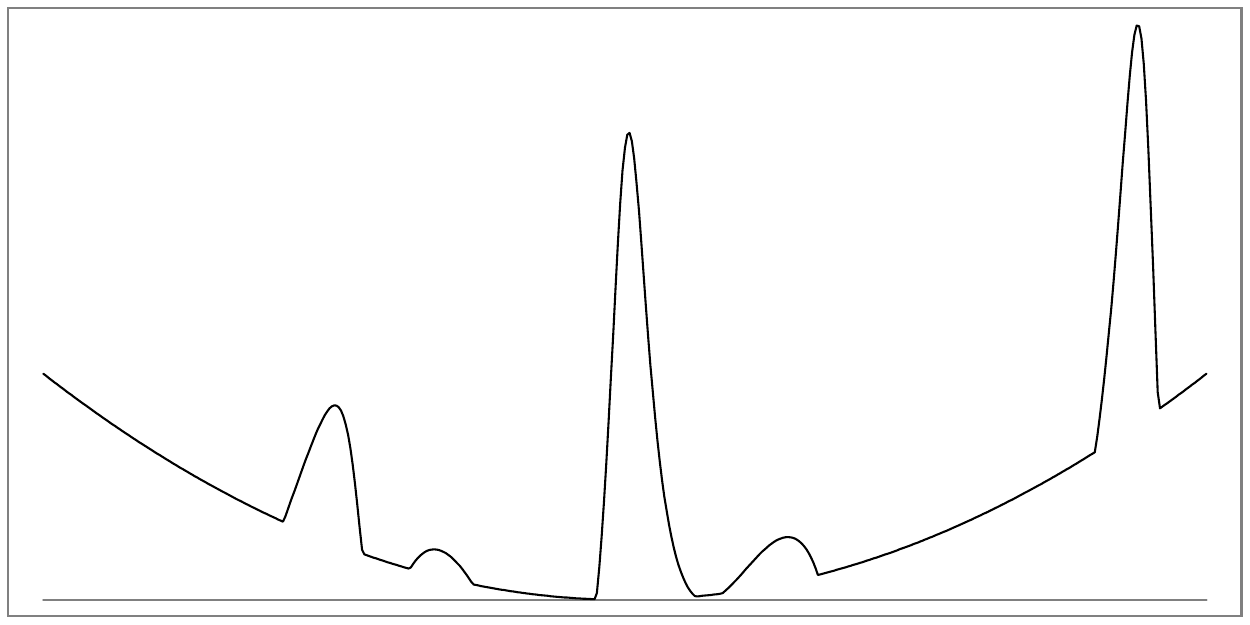}}
\\
\includegraphics[width=.46\hsize]{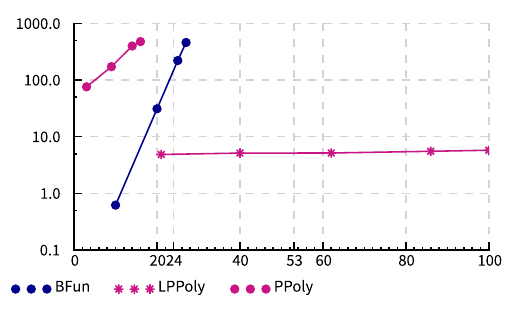}
&
\includegraphics[width=.46\hsize]{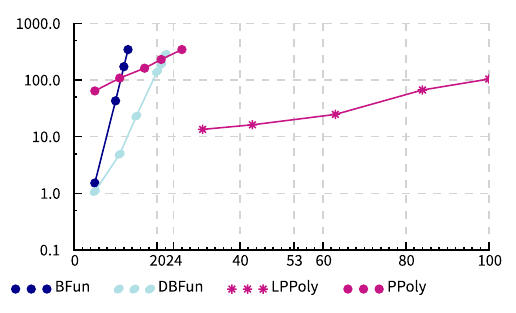}
\\
\bottomrule
\end{tabular}

  \caption{Measurements for functions with division and max}
  \label{fig:exprs-div-max}
\end{center}
\end{figure}

The first two functions in Fig.~\ref{fig:exprs-div-max} are variants
of the Runge family of functions, which have singularities 
in the complex plane near our domain $[-1,1]$.
It is shown in the proof of Theorem \ref{Theorem: bounded division in Gevrey's hierarchy} that the degree of any polynomial approximation to the 
function
$\tfrac{1}{1 + ax^2}$
to error $2^{-n}$
is polynomial in $n$ but exponential in $\log a$.
Thus, these functions are expected to be difficult to approximate by polynomials
even for moderately large values of $a$.  
This turns out to be the case in our implementation,
separating the performance of $\Poly$ from that of $\PPoly$ and $\Frac$.  
Still, $\PPoly$ performs quite poorly for both functions,
which suggests that while our division algorithm runs in polynomial time,
it cannot be considered practically feasible.
However, the local version $\LPPoly$ performs very well on both examples.
The $\Fun$ representations seem to perform with an exponential 
or worse time complexity, which is in line with our complexity results.

The last two functions in Fig.~\ref{fig:exprs-div-max} are
non-smooth and thus cannot be efficiently approximated
by polynomials.  The simpler of these two function is easily handled by the $\Fun$
representations because there is no dependency error, as $x$
appears in the expression effectively only once over each point
in the domain.  As predicted, $\Poly$ cannot cope with these functions
but its local version performs acceptably for the simpler function.
In theory, $\Frac$ should be able to approximate non-smooth functions
as well as $\PPoly$, but we have not yet found an efficient algorithm for this.

Note that $\DBFun$ does better for the last function in Fig.~\ref{fig:exprs-div-max}
than for the very similar function without $\max$.  This again points
to an element of luck due to a high sensitivity of the $\Fun$ representations 
to the quality of the interval extension of $f$.
The local representations have consistently outperformed their global counterparts,
and while the representation $\PPoly$ did quite poorly on some inputs,
its local version performed reasonably well overall. 

\bibliographystyle{abbrv}
\bibliography{fnreps}
\nocite{*}

\appendix
\section{On the uniform complexity of division for functions in Gevrey's hierarchy}
\label{Appendix A}

In this appendix we prove the claim made in the introduction that bounded division
is not polytime computable with respect to the representation for functions in Gevrey's hierarchy that is implicit in \cite{KMRZ}.

An infinitely differentiable function $f \colon [-1,1] \to \R$ belongs to Gevrey's hierarchy
if and only if there exist positive constants $B$, $\ell$ and $\gamma$ such that for all $x \in [-1,1]$ and all $k \in \N$ we have:
\begin{equation}\label{eq: Gev bound}
	\left|f^{(k)}(x)\right| \leq B \ell^k k^{\gamma k}
\end{equation}

The following definition is essentially due to Kawamura, M\"uller, R\"osnick, and Ziegler \cite{KMRZ}.
While the use of explicit representations is avoided throughout \cite{KMRZ}, 
the following is implicit in \cite[Definition 22 (a)]{KMRZ}.

\begin{Definition}\label{Defintion: Gev rep}
The space $\mathcal{G}([-1,1])$ of Gevrey functions on $[-1,1]$ is the represented space of all functions in Gevrey's hierarchy, where a name of a function $f \colon [-1,1] \to \R$ 
is given by a $\Fun$-name of $f$ (see Definition \ref{Definition: representations}) together with positive integer constants 
$B$, $\ell$, and $\gamma$ satisfying \eqref{eq: Gev bound}.
The constant $B$ is encoded in binary, the constant $\ell$ is encoded in unary, and the constant $\gamma$ is given by an encoding of $2^{\gamma}$ in unary.
\end{Definition}

The encodings for the integer constants in Definition \ref{Defintion: Gev rep} are chosen such that a polytime algorithm on the space of Gevrey functions is required to run in polylogarithmic time in $B$, in polynomial time in $\ell$, and in exponential time in $\gamma$. 
This convention ensures that \cite[Theorem 23]{KMRZ} translates to a result on second-order polytime computability on the represented space of Gevrey functions.
One should note that a different representation of the space of Gevrey functions is implicitly given in
\cite[Definition 22 (b)]{KMRZ}, but it is polytime equivalent to the above by virtue of 
\cite[Theorem 23 (a) and (b)]{KMRZ}.

\begin{Theorem}\label{Theorem: bounded division in Gevrey's hierarchy}
Bounded division
\[
	\operatorname{div} \colon \subseteq \mathcal{G}([-1,1]) \times \mathcal{G}([-1,1]) \to \mathcal{G}([-1,1]),
	\;
	(f,g) \mapsto f / g
\]
where
\[
	\dom \operatorname{div} = 
		\Set{(f,g) \in \mathcal{G}([-1,1]) \times \mathcal{G}([-1,1])}
			{g(x) \geq 1 \text{ for all } x \in [-1,1]}
\]
is not polytime computable with respect to the representation given in Definition \ref{Defintion: Gev rep}.
\end{Theorem}
\begin{proof}
Consider the family of polynomial functions:
\[
	f_n(x) = 1 + 2^n x^2.
\]
This sequence is bounded by $1$ from below and uniformly polytime computable with
respect to the above representation of Gevrey functions.

Now consider the sequence of reciprocals:
\[
	g_n(x) = \frac{1}{1 + 2^n x^2}.
\]
If bounded division is polytime computable, then this sequence is again uniformly polytime computable.
We will however show that any sequence of names for $(g_n)_n$ in the above representation grows
super-polynomially.

Let $(B_n)_n$, $(\ell_n)_n$, and $(\gamma_n)_n$ be sequences of natural
numbers satisfying
\[
	\left|g_n^{(k)}(x)\right| \leq B_n \ell_n^k k^{k \gamma_n}
\]
for all $k \in \N$ and all $x \in [-1,1]$.

The function $g_n$ has exactly two singularities in the complex plane:
the imaginary numbers $\tfrac{i}{2^{n/2}}$ and $-\tfrac{i}{2^{n/2}}$.
It follows that the radius of convergence of the Taylor series
of $g_n$ about $0$ is equal to $\tfrac{1}{2^{n/2}}$.
By the Cauchy-Hadamard theorem we obtain for all $n \in \N$:
\[
\limsup_{k \to \infty} \left|\frac{g_n^{k}(0)}{k!}\right|^{1/k} = 2^{n / 2}.
\]
Using the assumption on $(B_n)_n$, $(\ell_n)_n$, and $(\gamma_n)_n$ we obtain for all $n \in \N$ and all $k \in \N$:
\[
	B_n^{1/k} \ell_n \frac{k^{\gamma_n}}{(k!)^{1/k}} \geq 2^{n / 2}.
\]
Use the estimate $k! \geq (k/e)^k$:
\[
	B_n^{1/k} \ell_n e k^{\gamma_n - 1} \geq 2^{n / 2}.
\]
Take binary logarithms on both sides:
\[
	\tfrac{1}{k}\log(B_n) + \log(\ell_n) + \log(e) + (\gamma_n - 1) \log(k) \geq n / 2.
\]
Put $k = 2^{\sqrt{n}}$:
\[
	\tfrac{1}{2^{\sqrt{n}}}\log(B_n) + \log(\ell_n) + \log(e) + \sqrt{n}(\gamma_n - 1) \geq n / 2.
\]
Then at least one of the sequences $(\log(B_n))_n$, $(\ell_n)_n$, or $(2^{\gamma_n})_n$ has to grow at least as fast as 
$2^{\sqrt{n}}$. 
It follows that the size of any sequence of names of $(g_n)_n$ grows super-polynomially.
\end{proof}
\end{document}